\newtheorem{theorem}{Theorem}
\newtheorem{corollary}{Corollary}
\newtheorem{lemma}{Lemma}
\newtheorem{proposition}{Proposition}
\newtheorem{remark}{Remark}
\newenvironment{proof}[1][Proof]{\noindent\textbf{#1.} }{\ \rule{0.5em}{0.5em}}
\def\freq{\operatorname{freq}}
\def\Tr{\operatorname{Tr}}
\def\Pr{\operatorname{Pr}}
\def\supp{\operatorname{supp}}
\def\>{\rangle}
\def\<{\langle}
\def\({\left(}
\def\){\right)}
\def\id{\operatorname{id}}
\newcommand{\ket}[1]{\left|{#1}\right\rangle}
\newcommand{\bra}[1]{\left\langle{#1}\right|}
\newcommand{\pro}[1]{\ket{#1}\bra{#1}}
\newcommand{\mc}[1]{\mathcal{#1}}
\begin{document}

\title{Improved finite-size key rates for discrete-modulated continuous variable quantum key distribution under coherent attacks}

\author{Carlos Pascual-Garc\'ia}
\email{carlos.pascual@icfo.eu}
\affiliation{ICFO-Institut de Ciencies Fotoniques, The Barcelona Institute of Science and Technology, Av. Carl Friedrich Gauss 3, 08860 Castelldefels (Barcelona), Spain}

\author{Stefan B{\"a}uml}
\affiliation{ICFO-Institut de Ciencies Fotoniques, The Barcelona Institute of Science and Technology, Av. Carl Friedrich Gauss 3, 08860 Castelldefels (Barcelona), Spain}

\author{Mateus Araújo}
\affiliation{Departamento de Física Teórica, Atómica y Óptica, Universidad de Valladolid, 47011 Valladolid, Spain}

\author{Rotem Liss}
\affiliation{ICFO-Institut de Ciencies Fotoniques, The Barcelona Institute of Science and Technology, Av. Carl Friedrich Gauss 3, 08860 Castelldefels (Barcelona), Spain}

\author{Antonio Ac\'in}
\affiliation{ICFO-Institut de Ciencies Fotoniques, The Barcelona Institute of Science and Technology, Av. Carl Friedrich Gauss 3, 08860 Castelldefels (Barcelona), Spain}
\affiliation{ICREA - Instituci\'{o} Catalana de Recerca i Estudis Avan\c{c}ats, 08010 Barcelona, Spain.}
\date{\today}

\begin{abstract}
Continuous variable quantum key distribution (CVQKD) with discrete modulation combines advantages of CVQKD, such as the implementability using readily available technologies, with advantages of discrete variable quantum key distribution, such as easier error correction procedures. We consider a prepare-and-measure CVQKD protocol, where Alice chooses from a set of four coherent states and Bob performs a heterodyne measurement, the result of which is discretised in both key and test rounds. We provide a security proof against coherent attacks in the finite-size regime based on the generalised entropy accumulation theorem. We compute the corresponding asymptotic key rate imposing a photon-number cutoff and recent advances in conic optimisation. Our method yields significant improvements on key rates: at metropolitan distances, it provides positive key rates for  block sizes of the order of $10^8$ rounds, orders of magnitude smaller compared to previous works.
\end{abstract}

\maketitle
\tableofcontents

\section{Introduction}
Quantum key distribution (QKD) allows two honest parties, conventionally named Alice and Bob, to derive a secret key the security of which is guaranteed by the principles of quantum physics. The first QKD protocols were \emph{discrete variable} (DV) protocols \cite{BB84,E91,bennett1992quantum}
which typically required the legitimate parties to employ qubits. More recently, QKD protocols based on \emph{continuous variable} (CV) states, such as squeezed or coherent states
\cite{cerf2001quantum,grosshans2002continuous,weedbrook2004quantum,garcia2009continuous} have been developed, which can be implemented using readily available optical components and currently existing telecom infrastructure. Hence,
continuous variable QKD  (CVQKD) protocols present several
advantages over their discrete variable QKD (DVQKD) counterparts, including a simpler, affordable implementation that provides optimal scalability, especially at metropolitan  scales \cite{zhang2024continuousvariable,diamanti2015distributing}.

In contrast to DVQKD, where a diverse range of security proofs exist \cite{shor2000simple,gottesman2004security,renner2005information,renner2008security,koashi2009simple}, the scope of security analysis for CVQKD is mostly focused on the case of protocols based on a \textit{Gaussian modulation} \cite{weedbrook2004quantum}. Such an approach can be applied to optical quantum states
(e.g. coherent or squeezed states) and it has been proven to be secure against both collective \cite{navascues2006optimality,garcia2006unconditional,leverrier2010simple,leverrier2015composable} and general attacks \cite{leverrier2013security,leverrier2015composable,leverrier2017security,furrer2012continuous,furrer2014reverse}.

Unfortunately, implementing CVQKD
protocols with Gaussian modulation is challenging
because they require the exact implementation of the Gaussian sampling over all coherent states and complex error correction schemes.
For this reason, in this work, we focus on
\emph{discrete-modulated} CVQKD (DM CVQKD) protocols, where Alice employs a typically small discrete set of coherent states according to a pre-determined distribution, and then Bob applies a coarse-grained measurement. Such protocols provide a better scalability and reduced energy consumption, since they employ simpler error correction  and digital signal processing algorithms \cite{leverrier2009unconditional}. The first DM CVQKD protocols have already been implemented experimentally \cite{pan2022experimental}. On the other hand, DM CVQKD protocols are hindered by the fact that their security against general attacks in practical scenarios is not yet completely understood. The asymptotic and finite-block-size security of such protocols has been shown in the case of collective Gaussian attacks \cite{papanastasiou2019continuous}. In the case of general collective attacks, there are a number of works proving security both in the asymptotic \cite{kaur2019asymptotic,ghorai2019asymptotic,lin2019asymptotic,upadhyaya2021dimension,liu2021homodyne,denys2021explicit} and finite-block-size \cite{lupo_quantum_2022,kanitschar2023finite} settings.

Recent results also provide composable, finite-size security proofs against coherent attacks for binary modulated protocols using homodyne and heterodyne detection \cite{matsuura2021finite,Matsuura2023,yamano2024finite}, as well as protocols using four coherent states and heterodyne detection \cite{BPWFA23}. The former works are based on Azuma's inequality \cite{azuma1967weighted}, whereas the latter uses the original version of the entropy accumulation theorem (EAT) \cite{dupuis2016entropy,dupuis2019entropy}, applied to an entanglement-based version of the protocol. When applying the proofs to a prepare-and-measure scheme, however, some challenges arise, requiring a virtual tomographic measurement on Alice's side, which lead to decreased non-asymptotic rates.

In this work, we derive a composable security proof against general attacks for a DM CVQKD protocol using four coherent states (also known as 4-PSK protocol) and discretised heterodyne measurements, and provide a finite-size key analysis. The proof is related to the one provided in \cite{BPWFA23},
although it has two main important differences: first, the finite-size analysis is performed via
the generalised entropy accumulation theorem (GEAT)
\cite{metger2022generalised,metger2022security}, which can be directly applied to prepare-and-measure protocols without relying on a virtual tomography, providing improved finite-size secret key rates. Second, for the calculation of the asymptotic key rates, for which we impose a photon number cutoff~\cite{lin2019asymptotic,BPWFA23}, we use new numerical techniques from conic optimization \cite{skajaa2015,lorente2024,he2024} that provide an enhanced, more reliable performance in contrast with previous approaches based on the Frank-Wolfe algorithm \cite{winick2018reliable}. Our analysis allows us to distill secret key rates for  $n\sim10^8$ rounds at metropolitan distances.

\section{Preliminaries}
\subsection{Basic notations}
Let us introduce some notation. For a Hilbert space $\mc{H}_A$, $\mc{D}(\mc{H}_A)$ denotes the set of density matrices, i.e. positive semidefinite matrices $\rho_A$ such that $\Tr[\rho_A]=1$, acting on quantum system $A$. Unless otherwise stated, we allow the Hilbert space to be infinite-dimensional, but assume it to be separable\footnote{In the case of the radiation field, this property is guaranteed by the existence of the Fock states, since they constitute an orthonormal countable basis in the corresponding Hilbert space.}, such that any infinite-dimensional state can be arbitrarily well-approximated by a finite-rank one. For convenience, we will occasionally consider subnormalised states. I.e. $\Tr[\rho]\leq1$, and for such cases, we define the set $\mc{D}_{\leq}(\mc{H}_A)$.  By $\mc{H}_{AB}$ we denote a tensor product Hilbert space $\mc{H}_A\otimes\mc{H}_B$, and by $\rho_{AB}$ a bipartite density matrix on it. We further express classical random variables $X$, taking values $\{x\}$ according to the distribution $\{p_x\}$ as density matrices  $\rho_X=\sum_xp_x\pro{x}_X$. By $XY$ we denote the Cartesian product of two random variables $X$ and $Y$. We further use the notation $A_1^n=A_1A_2\ldots A_n$ and $X_1^n=X_1X_2\ldots X_n$ for quantum and classical systems. Classical-quantum (cq) states are expressed using the notation $\rho_{XA}=\sum_xp_x\pro{x}_X\otimes\rho_A^x$. For a cq state $\rho_{CQ} = \sum_{c} p(c) \pro{c} \otimes \rho_c$, we define an event $\Omega$ as a subset of the elements $\{c\}$. The non-normalised conditional state is given by $\rho_{CQ\wedge\Omega} = \sum_{c\in\Omega} p(c) \pro{c} \otimes \rho_c$ and the normalised 
conditional state is then given by $\rho_{CQ}|_{\Omega} =\frac{1}{\Pr[\Omega]} \rho_{CQ\wedge\Omega}$ where $\Pr[\Omega] := \sum_{c\in\Omega} p(c)$. 

For two subnormalised states $\rho,\sigma\in\mc{D}_{\leq}(\mc{H}_A)$, we define the generalised fidelity as
\begin{equation}
    F(\rho,\sigma)=\(\Tr\left|\sqrt{\rho}\sqrt{\sigma}\right|+\sqrt{(1-\Tr[\rho])(1-\Tr[\sigma])}\)^2,
\end{equation}
from which the purified distance follows as
\begin{equation}
    P(\rho,\sigma)=\sqrt{1-F(\rho,\sigma)}.
\end{equation}
Further, the generalised trace distance is defined as
\begin{equation}
    \Delta(\rho,\sigma)=\frac{1}{2}\|\rho-\sigma\|_1+\frac{1}{2}\left|\Tr[\rho-\sigma]\right|.
\end{equation}
The generalised trace distance and the purified distance are metrics on $\mc{D}_{\leq}(\mc{H}_A)$. Both are related by the Fuchs van de Graaf inequality
\begin{equation}\label{Fuchs}
  \Delta(\rho,\sigma)\leq  P(\rho,\sigma)\leq\sqrt{2\Delta(\rho,\sigma)-\Delta(\rho,\sigma)^2}\leq\sqrt{2\Delta(\rho,\sigma)}.
\end{equation}
In this work we will make use of a number of entropic quantities. In addition to the well-known von Neumann entropy $H(\rho)=-\Tr[\rho\log\rho]$, the conditional von Neumann entropy $H(A|B)_{\rho}=H(AB)_{\rho}-H(B)_{\rho}$, as well as the Umegaki relative entropy,
\begin{equation}
    D(\rho||\sigma)=\begin{cases}
    \frac{1}{\Tr[\rho]}\Tr\left[\rho(\log\rho-\log\sigma)\right]&\text{ if }\supp(\rho)\subset\supp(\sigma)\\
    \infty&\text{ otherwise},
    \end{cases}
    \end{equation}
for positive semidefinite $\rho$ and $\sigma$, we will make use of the conditional min entropy, defined for a subnormalised quantum state $\rho_{AB}\in\mc{D}_{\leq}(\mc{H}_{AB})$ by \cite{tomamichel2015quantum}
\begin{align}
    &H_{\min}(A|B)_\rho=\sup_{\sigma_B\in\mc{D}_{\leq}(\mc{H}_{B})}\sup\left\{\lambda\in\mathbb{R}:\rho_{AB}\leq\exp(-\lambda)\mathbb{1}_A\otimes\sigma_B\right\},\label{def:minentropy}
\end{align}
For $\epsilon\geq0$, we can then define the smoothed version of the min entropy as \cite{berta2016smooth}
\begin{align}
    &H^\epsilon_{\min}(A|B)_\rho=\max_{\bar{\rho}\in\mc{B}^\epsilon(\rho_{AB})}H_{\min}(A|B)_{\bar{\rho}},\label{def:smoothmin}
\end{align}
where $\mc{B}^\epsilon(\rho_{A})$ is the $\epsilon$-ball around a state $\rho_A$ in terms of purified distance, i.e. the set of subnormalised states $\tau\in\mc{D}_{\leq}(\mc{H}_{A})$ such that $P(\tau,\rho)\leq\epsilon$. 

\subsection{Security definition}

When two parties, Alice and Bob, wish to communicate in perfect secrecy in the presence of a quantum eavesdropper Eve, they can perform a QKD protocol, typically consisting of $n$ rounds of quantum communication and local measurements, followed by classical post-processing steps involving parameter estimation, error correction and privacy amplification. An instance of a QKD protocol may be aborted if any steps such as parameter estimation fail, or if a subroutine such as error correction aborts. If the protocol does not abort, the goal is to obtain a state close to a fully secret ccq state of the form $\rho^\text{ideal}_{K_AK_BE}=\frac{1}{d}\sum_{x=0}^{d-1}\pro{xx}_{K_AK_B}\otimes\rho_E$, where Alice and Bob's systems are classical, while Eve's system may be quantum. Such a state corresponds to $\log d$ bits of an ideal classical key between Alice and Bob which is secret in that it is completely uncorrelated from Eve. Furthermore, it is correct in the sense that Alice and Bob's systems are perfectly classically correlated.

A security proof for a QKD protocol then involves two parts: Firstly, it has to be shown that the protocol results in a state that is secret and correct, i.e. close to a fully secret ccq-state. Secrecy and correctness combined are also called \emph{soundness}. Formally,  for $\epsilon^\mathrm{sou}>0$, a QKD protocol is said to be \emph{$\epsilon^\mathrm{sou}$-sound} if it results in a state $\rho^\mathrm{QKD}_{ABE}$ after the classical post-processing is performed such that, if we condition on the event ${\Omega_\text{NA}}$ of not aborting the protocol, it holds
\begin{equation}
\Pr[\Omega_{\mathrm{NA}}]\frac{1}{2}\left\|\rho^\mathrm{QKD}_{K_AK_BE}|_{\Omega_\text{NA}}-\rho^\text{ideal}_{K_AK_BE}\right\|_1\leq\epsilon^\mathrm{sou}.
\end{equation}
As we wish to treat the error correction protocol separately from the the remaining protocol, it will be convenient to split the soundness property into a secrecy and correctness part. Namely, let $\epsilon^\mathrm{sec}>0$ and $\epsilon^\mathrm{cor}>0$. A QKD protocol is said to be \emph{$\epsilon^\mathrm{sec}$-secret} if 
\begin{equation}\label{def:secret}
\Pr[\Omega_{\mathrm{NA}}]\frac{1}{2}\left\|\rho^\mathrm{QKD}_{K_BE}|_{\Omega_\text{NA}}-\rho^\text{ideal}_{K_BE}\right\|_1\leq\epsilon^\mathrm{sec}.
\end{equation}
The protocol is further said to be \emph{$\epsilon^\mathrm{cor}$-correct} if
\begin{equation}
    \Pr[K_A\neq K_B\land\Omega_\mathrm{NA}]\leq \epsilon^\mathrm{cor}.
\end{equation}
If the protocol is both $\epsilon^\mathrm{sec}$-secret and $\epsilon^\mathrm{cor}$-correct, it can be shown by means of the triangle inequality for the generalised trace distance that it is $(\epsilon^\mathrm{sec}+\epsilon^\mathrm{cor})$-sound as well. The second part of a security proof is to show completeness. Namely, that there is an honest implementation (i.e. an implementation without presence of Eve) that does not abort with very high probability. Formally, for $\epsilon^\mathrm{com}>0$, we say that a QKD protocol is \emph{$\epsilon^\mathrm{com}$-complete}, if
\begin{equation}
1-\Pr[\Omega_{\mathrm{NA}}]|_\mathrm{hon}\leq\epsilon^\mathrm{com},
\end{equation}
where the subscript $\mathrm{hon}$ denotes an honest implementation.

\section{The protocol}\label{sec:Protocol}

We consider a variant of the 4-PSK protocol employing heterodyne measurements, as described in \cite{BPWFA23}.
For each round, Alice randomly chooses one out of four possible coherent states
$\ket{\varphi_x} \in \{\ket{i^x \alpha}\}_{x=0}^3$, where the amplitude
$\alpha \in \mathbb{R}$ is an agreed parameter of the protocol;
Alice then sends the state to Bob through an insecure
quantum channel, and Bob performs a coarse-grained heterodyne measurement, discretising the signal according to a pre-determined binning. Our approach differs from the one of \cite{lin2019asymptotic} in two aspects---our protocol does not include post-selection, and we apply a discretisation for Bob's measurement outputs in both key and parameter estimation (test) rounds, and not only in key rounds.

Let us formalise our protocol as
follows: Alice and Bob first agree on the total number of rounds
$n \in \mathbb{N}$, a modulation given by the design parameters $\delta,\Delta$, such that $0<\delta<\Delta$, as well as the amplitude for the states $\alpha \in \mathbb{R}$. For convenience, let us define $p^\mathrm{K},p^\mathrm{PE}\in [0,1]$ such that $p^\mathrm{PE} = 1 - p^\mathrm{K}$ which will denote the probabilities that a round is used for key distillation or parameter estimation. For every round $j \in \{1,\ldots,n\}$, Alice and Bob perform the following operations:

\begin{enumerate}
\item \label{protocol_prep}
{{\it Alice's State Preparation:}
Alice chooses a uniformly random value $x_j\in\{0,\ldots,3\}$, corresponding to a random variable $X_j$, defining her prepared quantum state
$\ket{\varphi_{x_j}}_{A'}$. She then sends the state $\ket{\varphi_{x_j}}_{A'}$ to Bob
via an insecure quantum channel controlled by Eve. 
Alice also randomly chooses the kind of round via a random variable $I_j$, taking $I_j=0$ for key rounds and $I_j=1$ for test rounds (i.e. used for or parameter estimation), whose respective probabilities are $\Pr[I_j=0]=p^\mathrm{K}$ and $\Pr[I_j=1]=p^\mathrm{PE}$.
For the following discussion, we define separate random variables corresponding to the two types of rounds:
\begin{align}
\hat{X}_j&=\begin{cases}
x_j & \text{ if }I_j=0,\\
\perp &\text{ else}.
\end{cases}\\
\tilde{X}_j&=\begin{cases}
x_j & \text{ if }I_j=1,\\
\perp & \text{ else}.
\end{cases}
\end{align}
After Bob receives and measures the quantum state,
Alice sends to Bob the value of the random variable $I_j$
via a classical authenticated channel.
}

\item \label{protocol_meas} {{\it Bob's Measurement:}
Bob performs a heterodyne measurement on subsystem $B_j$ which he has received from Alice via the quantum channel, yielding a continuous random variable $Y_j$ which takes complex values $y_j\in\mathbb{C}$. Again, let us conveniently define separate random variables corresponding to the two types of rounds:
\begin{align}
\hat{Y}_j&=\begin{cases}
y_j &\text{ if }I_j=0,\\
\perp &\text{ else}.
\end{cases}\\
\tilde{Y}_j&=\begin{cases}
y_j & \text{ if }I_j=1,\\
\perp & \text{ else}.
\end{cases}
\end{align}}

\item \label{protocol_discr} {{\it Discretisation:}
Bob translates his continuous outcome
of the heterodyne measurement into a discrete result. This process is different for key and parameter estimation rounds---for key rounds, Bob discretises
the result according to a division of the phase space into quadrants. Formally, denoting $\hat{y}_j = |\hat{y}_j|e^{i\hat{\theta}_j}$,
Bob defines the discrete random variable for key rounds as
\begin{equation}
\hat{Z}_j=\begin{cases}
z &\text{ if }{ \hat{\theta}_j\in[\frac{\pi}{4}(2z-1),\frac{\pi}{4}(2z+1))} \wedge I_j= 0,\\
\perp &\text{ else }
\end{cases}
\end{equation}
for $z \in \{ 0,\ldots,3\}$ and $\hat{Z}_j=\perp$ explicitly representing the non-key rounds. In test rounds, Bob performs a discretisation by separating the phase space into six modules as follows: let $\delta$ and $\Delta$ with  $0<\delta<\Delta$; for amplitudes smaller than $\delta$, the phase space is separated into four modules as in key rounds. Outcomes whose amplitudes lie between $\delta$ and $\Delta$ correspond to a fifth module, and amplitudes greater or equal than $\Delta$ correspond to the sixth module (see also Figure \ref{fig:PhaseSpaceModules}). Denoting $\tilde{y}_j = |\tilde{y}_j|
e^{i\tilde{\theta}_j}$ and $z \in \{0,\ldots,3 \}$, we define
\begin{equation}\label{eq:discPE}
\tilde{Z}_j=\begin{cases}
z &\text{ if }{ \tilde{\theta}_j\in[\frac{\pi}{4}(2z-1),\frac{\pi}{4}(2z+1))\land  |\tilde{y}_j| \in [0,\delta)} \wedge I_j= 1 ,\\
4 &\text{ if } {|\tilde{y}_j|\in [\delta,\Delta)} \wedge I_j= 1 , \\
5 &\text{ if } {|\tilde{y}_j|\in [\Delta,\infty)}\wedge I_j= 1, \\
\perp &\text{ else.}
\end{cases}
\end{equation}

}
\end{enumerate}

Summarising steps \ref{protocol_prep}--\ref{protocol_discr},
each round $i$ of the protocol creates discrete classical
random variables $\hat{X}_i$ and $\hat{Z}_i$
for key generation, as well as $\tilde{X}_i$ and $\tilde{Z}_i$
for parameter estimation. 
We denote $C_i = \tilde{X}_i\tilde{Z}_i$, 
as the register holding all 
the classical information
used for parameter estimation in round $i$, and $C_1^n = C_1\ldots C_n$ for all said registers collected during the execution of the protocol. 
Let us define Eve's side information for rounds $1$ to $i$ as $E'_i$
(so that $E'_i$ includes all information
from previous rounds, $E'_1,\ldots,E'_{i-1}$), and we note that Eve has
access to registers $I_i$ and $\tilde{X}_i$ sent by Alice to Bob. 
All in all, the joint quantum state of Alice, 
Bob and Eve after $n$ rounds of steps \ref{protocol_prep}--\ref{protocol_discr} is thus
$\sigma^\mathrm{QKD}_{\hat{X}_1^n\hat{Z}_1^nC_1^nI_1^nE'_n}$. 
We note also that the round $i+1$ only starts after the $i$-th signal has arrived at Bob's laboratory. This is used to ensure that Eve only possesses one transmitted state at the same time, which is needed for applying the GEAT to prepare-and-measure protocols using the method of \cite{metger2022security} (cf. \cite[Condition 3.1]{metger2022security}).

Next, Bob performs parameter estimation,
which he does by comparing the test results $C_1^n$ with the probability distribution representing
the test results in an ideal, honest implementation. 
If $C_1^n$ deviates excessively from
the results expected in an ideal run, 
the protocol is aborted. 
Let us formalise this notion by considering 
the string of test results
$c_1^n = (c_1,...,c_n)$, where $c_i$
takes values from an alphabet $\mathcal{C}$---if round $i$ is a key round, then $c_i=(\perp,\perp)$
whereas $c_i = (x_i, z_i)$ for parameter estimation
rounds, for all $x_i\in\{0,1,2,3\}$ and
$z_i\in\{0,1,2,3,4,5\}$. 
Thus, we define $\mc{C}$
as
\begin{equation}\label{def:alphabetC}
    \mc{C} = \tilde{\mc{C}}\cup \{(\perp,\perp)\},
\end{equation}
where $\tilde{\mc{C}}$ is the set of possible values
for $c_i$ in parameter estimation rounds:
\begin{equation}
    \tilde{\mc{C}} = \{0,1,2,3\} \times \{0,1,2,3,4,5\}.
\end{equation}
To compare the observed outcomes $C_1^n$ 
with the ideal expected ones,
we define a probability distribution $p_0\in\mc{P}_{\mc{C}}$ on the alphabet $\mc{C}$ to represent the results of an 
honest run of the protocol.
This distribution is identical and independent for each round. Next, we split it according to the 
simulated probabilities $p_0^\mathrm{s}$ of
a parameter estimation round, happening 
with probability $p^\mathrm{PE}$.
Distributions $p_0$ and $p_0^\mathrm{s}$ are thus related
as
\begin{eqnarray}
    p_0(x,z) &=& p^\mathrm{PE}p^\mathrm{s}_0(x,z), \quad 
    \forall (x,z) \in \tilde{\mathcal{C}},\label{eq:p0PE}\\
   p_0(\perp,\perp) &=& 1-p^\mathrm{PE}
   = p^\mathrm{K}.\label{eq:p0PE2}
\end{eqnarray}
We will provide the explicit form
of $p_0^\mathrm{s}$ in Section \ref{subs:NumImplementation}. Now, let us compare the observed frequency distribution of the test results $ c_1^n \in \mc{C}^n$ to the expected probability distribution $p_0$. Given a string $c_1^n$ of test results, we denote by $\freq_{c_1^n}\in\mc{P}_\mc{C}$
the frequency distribution corresponding 
to each possible value in $c_1^n$.
Thus, we define for any $c\in\mc{C}$
\begin{equation}
\freq_{c_1^n}(c)=\frac{|\{i \in \{1, 2, \ldots, n\}
: c_i=c\}|}{n}.
\end{equation}
Given these conditions, it is possible to 
compare $\freq_{c_1^n}$
and $p_0$ by means of an affine function
$f^\text{PE}:\mc{P}_{\mc{C}}\to\mathbb{R}$ which maps any probability distribution
$p\in\mc{P}_{\mc{C}}$ to a real number $\mathbb{R}$. Such a function may be written as
\begin{equation}
    f^\mathrm{PE}(p)
    = \sum_{(x,z)\in\tilde{\mathcal{C}}} h_{xz} p(x,z),
    \label{eq:fPE}
\end{equation}
for some scalar coefficients
$h_{x,z}\in\mathbb{R}$, whose explicit form will be provided later in \ref{Subsec:Cutoff}. We will also see that the function $f^\mathrm{PE}$ is close to what, in the context of GEAT, is referred to as a min-tradeoff function (see eq. (\ref{eq:MinTrade}) for an exact definition). Bob then evaluates the affine function $f^\mathrm{PE}$ on both probability distributions $\freq_{c_1^n}$ and $p_0$, and he aborts the protocol if and only if $f^\mathrm{PE}(\freq_{c_1^n}) < f^\mathrm{PE}(p_0)
- \delta^\mathrm{tol}$ for a given deviation tolerance $\delta^\mathrm{tol}>0$. In other words, the event of passing the parameter estimation
without aborting the protocol is defined as
\begin{equation}\label{eq:OmegaPE}
    \Omega_\mathrm{PE}:=\left\{c_1^n\in\mc{C}^n
    : f^\mathrm{PE}(\freq_{c_1^n}) \geq f^\mathrm{PE}(p_0)
    - \delta^\mathrm{tol}\right\}.
\end{equation}

If the protocol does not abort
after parameter estimation,
Alice and Bob perform error correction of their
respective raw keys $\hat{X}_1^n$ and $\hat{Z}_1^n$.
This process is performed in terms of a
reverse information reconciliation,
where Bob sends an error correction string (denoted $L$)
to Alice, who then produces her guess for the raw key,
$\bar{X}_1^n$.
To make sure that Alice's raw key $\bar{X}_1^n$
is identical to Bob's raw key $\hat{Z}_1^n$,
Bob chooses a random hash function $F$, computes the hash of
his key $\hat{F} = F(\hat{Z}_1^n)$,
and sends to Alice a description of $F$ and the value $\hat{F}$.
Alice then computes $\bar{F} = F(\bar{X}_1^n)$. If $F(\hat{Z}_1^n) \neq F(\bar{X}_1^n)$,
she aborts the protocol.
The event of passing the error correction step is then defined as
\begin{equation}\label{eq:OmegaEC}
    \Omega_{\mathrm{EC}} = \big[F(\hat{Z}_1^n) = F(\bar{X}_1^n)\big].
\end{equation}
Let us also assume that the probability of mistakenly passing the error correction step is upper bounded by $\epsilon_\mathrm{EC}>0$, 
\begin{equation}
\Pr\big[F(\hat{Z}_1^n) = F(\bar{X}_1^n)\wedge\hat{Z}_1^n\neq \bar{X}_1^n \big]\leq \epsilon_\mathrm{EC}.
\end{equation}
Furthermore, we consider that the probability of
\emph{not} passing the error correction step in
an ideal run (without an attack)
is upper bounded by $\epsilon^c_\mathrm{EC}>0$, formally described as
\begin{equation}
\Pr\big[F(\hat{Z}_1^n) \neq F(\bar{X}_1^n)\big]\big|_\mathrm{hon}\leq \epsilon^c_\mathrm{EC}.
\end{equation}
To conclude, we define the total event
of nonabortion as succeeding in both parameter estimation and error correction
\begin{equation}\label{eq:OmegaNonAbort}
\Omega_{\mathrm{NA}} = \Omega_{\mathrm{PE}}
\wedge \Omega_{\mathrm{EC}}.
\end{equation}  
We note that $\Omega_{\mathrm{PE}}$ only depends on
$C_1^n$ whereas $\Omega_{\mathrm{EC}}$ depends
on registers $\bar{X}_1^n \hat{Z}_1^n\bar{F}\hat{F}$. Taking all these considerations into account, the final joint state of Alice, Bob, and Eve after the protocol is given by
\begin{equation}\label{RawKeyState}
    \sigma^\mathrm{QKD}_{\bar{X}_1^n \hat{Z}_1^n  C_1^nI_1^n \bar{F}\hat{F} LE'_n}=\Pr[\Omega_{\mathrm{NA}}]\sigma^\mathrm{QKD}_{\bar{X}_1^n \hat{Z}_1^n  C_1^nI_1^n \bar{F}\hat{F} LE'_n}|_{ \Omega_{\mathrm{NA}}}+(1-\Pr[\Omega_{\mathrm{NA}}])\sigma^\mathrm{QKD}_{\bar{X}_1^n \hat{Z}_1^n  C_1^nI_1^n \bar{F}\hat{F} LE'_n}|_{\neg \Omega_{\mathrm{NA}}}.
\end{equation}

\begin{figure}
    \includegraphics[width=0.8\linewidth]{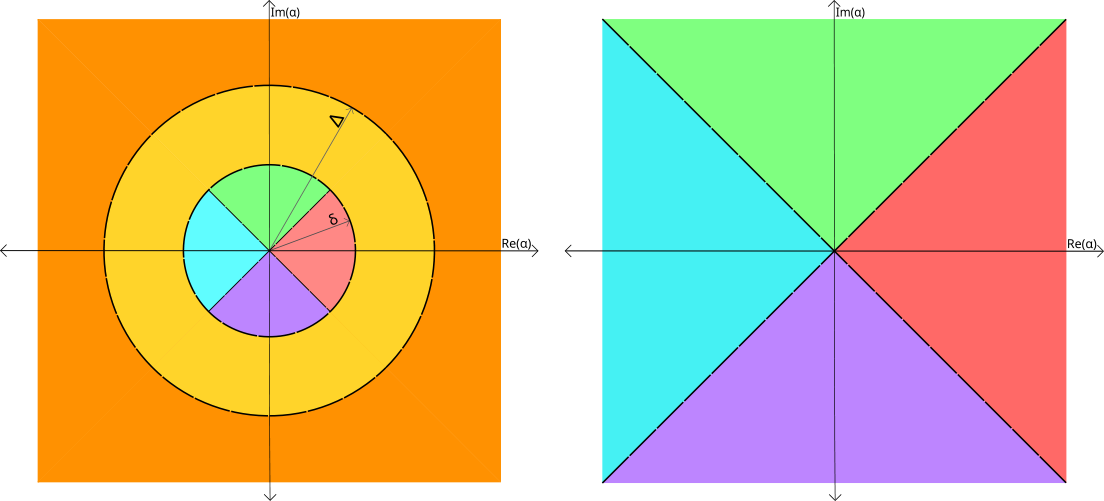}
    \caption{Discretisations of phase space by Bob for parameter estimation (left) based on phases and amplitudes according to the parameters $\delta,\Delta$, and key generation (right) rounds which only require a binning according to the phase.}
    \label{fig:PhaseSpaceModules}
\end{figure}

\begin{remark}
We note that while this work considers a modulation based on six binnings in (\ref{eq:discPE}), our analysis can be easily adapted to any other discretizations. For instance, it is possible to perform an extra modulation for phases also when the amplitude is larger than $\delta$. Namely, for $j \in \{0,\ldots,3\}$
\begin{equation}\label{eq:discPE_2}
\tilde{Z}_i=\begin{cases}
j & \text{ if }{ \tilde{\theta}_i\in[\frac{\pi}{4}(2j-1),\frac{\pi}{4}(2j+1))\land  |\tilde{y}_i| \in [0,\delta)},\\
j + 4 & \text{ if }{ \tilde{\theta}_i\in[\frac{\pi}{4}(2j-1),\frac{\pi}{4}(2j+1))\land  |\tilde{y}_i| \in [\delta, \Delta)},\\
j + 8 & \text{ if } { \tilde{\theta}_i\in[\frac{\pi}{4}(2j-1),\frac{\pi}{4}(2j+1))\land |\tilde{y}_i|\geq \Delta}, \\
\perp & \text{ else}.
\end{cases}
\end{equation}
The schema based on six modules is more suitable for finite block sizes, since its smaller number of binnings decrease the number of rounds needed for parameter estimation, and thus also reduces the overall statistical deviations. On the other hand, a modulation as \eqref{eq:discPE_2} provides better key rates in the asymptotic regime, especially for large distances, albeit it requires larger block sizes to provide positive key rates in the finite regime.
\end{remark}

\section{Security proof}

In this section we show security against coherent attacks. The proof consists of two steps---firstly, in Subsection \ref{sec:sound} we show the soundness of the  protocol and provide a lower bound on the key rate. The soundness proof is based on the GEAT \cite{metger2022generalised,metger2022security}, a tool allowing us to lower bound the amount of key obtainable in the presence of coherent attacks by reducing the problem to the case of only collective attacks. Secondly, in Subsection \ref{sec:compl} we show the completeness of the protocol, i.e. that there is a nonzero probability of it not being aborted. Finally, in Subsection \ref{sec:MinTrade}, we use techniques based on \cite{winick2018reliable,upadhyaya2021dimension} to numerically compute the collective attack bounds.
 
 \subsection{Soundness}\label{sec:sound}

In this Section we lower bound the achievable key rate $r^\mathrm{K}=\ell/n$, where $\ell$ is the length of the final key (in bits) and $n$ the number of rounds, conditioned on the event $ \Omega_{\mathrm{NA}}$ of the protocol not being aborted. We make use the following version of the leftover hash Lemma, noting that it includes the possibility of infinite dimensional Hilbert spaces---particularly, we use the version derived in \cite{furrer2012security,berta2016smooth} for von Neumann algebras, where infinite dimensional Hilbert spaces represent a special case.

\begin{proposition}\label{Furrer_Hash}\emph{\cite{furrer2012security,berta2016smooth}}
    Let $Z$ and $K$ be finite sets such that $|Z|\geq|K|$, and let $(\mathcal{F},\mathcal{P}_\mathcal{F})$ be a family of two-universal $\{Z,K\}$-hash functions. Let $\omega_{ZE}$ be a sub-normalized cq-state, where the quantum system may be infinite dimensional. Let $\epsilon>0$. It then holds
    \begin{equation}
        \|\omega^\mathcal{F}_{KE}-\frac{1}{|K|}\mathbb{1}_K\otimes\omega_E\|_1\leq\sqrt{|K| 2^{-H_{\min}^\epsilon (Z|E)_\omega}}+4\epsilon,
    \end{equation}
    where $\omega^\mathcal{F}_{KE}=\sum_{f\in\mathcal{F}}p(f)T^f_{Z\to K}\otimes\id_E(\omega_{ZE})$ and the map $T^f_{Z\to K}$ corresponds to the application of a hash function $f$ to system $X$.
\end{proposition}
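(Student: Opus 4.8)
The plan is to follow the standard two-step structure of leftover hashing, arranged so that the only dimension factor entering the bound is the finite output size $|K|$, while the possibly infinite-dimensional system $E$ is controlled entirely through a trace-class weighting.

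\emph{Step 1: reduction to the non-smoothed statement.} Using the definition \eqref{def:smoothmin}, I pick a subnormalised $\bar\omega_{ZE}\in\mc{B}^\epsilon(\omega_{ZE})$ attaining $H_{\min}(Z|E)_{\bar\omega}=H^\epsilon_{\min}(Z|E)_\omega$; since $Z$ is classical, $\bar\omega_{ZE}$ may be taken to be a cq-state. Writing $\tau_K=\frac1{|K|}\mathbb{1}_K$, the triangle inequality bounds the target by $\|\omega^\mathcal{F}_{KE}-\bar\omega^\mathcal{F}_{KE}\|_1+\|\bar\omega^\mathcal{F}_{KE}-\tau_K\otimes\bar\omega_E\|_1+\|\tau_K\otimes(\bar\omega_E-\omega_E)\|_1$. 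Both the hashing map $\rho\mapsto\sum_f p(f)\,T^f_{Z\to K}\otimes\id_E(\rho)$ and the partial trace over $Z$ are CPTP, so purified distance is non-increasing under them; combining this with the Fuchs--van de Graaf bound \eqref{Fuchs} (which gives $\|\rho-\sigma\|_1\le2\Delta(\rho,\sigma)\le2P(\rho,\sigma)$) shows that the first and third terms are each at most $2P(\bar\omega_{ZE},\omega_{ZE})\le2\epsilon$. This accounts for the additive $4\epsilon$.

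\emph{Step 2: the non-smoothed bound.} It remains to show $\|\bar\omega^\mathcal{F}_{KE}-\tau_K\otimes\bar\omega_E\|_1\le\sqrt{|K|\,2^{-H_{\min}(Z|E)_{\bar\omega}}}$. By convexity of the trace norm this is at most $\mathbb{E}_{f}\,\|M_f\|_1$, where $M_f=T^f_{Z\to K}\otimes\id_E(\bar\omega_{ZE})-\tau_K\otimes\bar\omega_E$. Let $\sigma_E\in\mc{D}_{\leq}(\mc{H}_E)$ achieve (or approach) the supremum in \eqref{def:minentropy}, so that $\bar\omega_{ZE}\le2^{-H_{\min}(Z|E)_{\bar\omega}}\,\mathbb{1}_Z\otimes\sigma_E$. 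A Cauchy--Schwarz conversion from the $1$-norm to the $\sigma_E$-weighted $2$-norm gives $\|M_f\|_1\le\sqrt{|K|}\,\|(\mathbb{1}_K\otimes\sigma_E^{-1/4})\,M_f\,(\mathbb{1}_K\otimes\sigma_E^{-1/4})\|_2$; the key point is that the dimension factor here is $\sqrt{|K|}$ only, the system $E$ being absorbed into the weighting. Applying Jensen's inequality moves the expectation inside the square root, and expanding the Hilbert--Schmidt norm produces diagonal terms together with off-diagonal ``collision'' terms in $Z$. Here the two-universality property $\Pr_f[f(z)=f(z')]\le1/|K|$ for $z\neq z'$ makes the collision terms collapse, and the operator inequality from \eqref{def:minentropy} bounds the surviving contribution by $2^{-H_{\min}(Z|E)_{\bar\omega}}$. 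Collecting the $\sqrt{|K|}$ prefactor yields the claimed bound; equivalently, this is the chain $\mathbb{E}_f\|M_f\|_1\le\sqrt{|K|\,2^{-H_2(Z|E)_{\bar\omega}}}\le\sqrt{|K|\,2^{-H_{\min}(Z|E)_{\bar\omega}}}$ through the conditional collision entropy $H_2\ge H_{\min}$.

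\emph{Main obstacle.} The delicate part is making Step 2 rigorous when $\mc{H}_E$ is infinite-dimensional and $\sigma_E$ may have unbounded inverse, which is exactly why I would invoke the von Neumann-algebra formulation of \cite{furrer2012security,berta2016smooth}: there the $1$-to-$2$ norm conversion and the collision-entropy identities are established with the dimension factor depending only on the finite alphabet $|K|$, the operators on $E$ being handled through trace-class and Hilbert--Schmidt arguments rather than dimension counting, with $\Tr[\sigma_E]\le1$ replacing any ``rank of $E$''. The separability of $\mc{H}_E$ assumed in the preliminaries lets one approximate $\bar\omega_{ZE}$ and $\sigma_E$ by finite-rank operators, prove the inequality in finite dimensions, and pass to the limit; all other steps are identical to the finite-dimensional case.
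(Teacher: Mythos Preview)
The paper does not give its own proof of this proposition: it is simply quoted from \cite{furrer2012security,berta2016smooth} with no argument beyond the citation. Your sketch is the standard two-step proof of the leftover hash lemma---smoothing via the triangle inequality and Fuchs--van de Graaf, then the non-smoothed bound via a weighted $1\to2$ norm change and the collision-entropy estimate from two-universality---and it correctly flags that the infinite-dimensional $E$ is what forces one into the operator-algebraic treatment of the cited works rather than a naive dimension count. So there is nothing in the paper to compare against; your outline is consistent with how the result is proved in the references the paper invokes.
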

As a corollary, one obtains a bound on the key length.

\begin{corollary}\label{renner}
Let $\epsilon,\epsilon_\mathrm{PE},\epsilon_\mathrm{EC}\in(0,1)$, and $\Omega_\mathrm{PA}$, $\Omega_\mathrm{EC}$ be the events defined by \eqref{eq:OmegaPE} and \eqref{eq:OmegaEC}. Assume a reverse-reconciliation error correction protocol, such that $\Pr[K_A\neq K_B\land\Omega_\mathrm{EC}]\leq \epsilon_\mathrm{EC}$, during which a finite number $\mathrm{leak}_\mathrm{EC}:=\log|L|$ bits are leaked to Eve. Then, if the protocol does not abort, by applying a family of hash functions as in Proposition \ref{Furrer_Hash} to the raw key state $\sigma^\mathrm{QKD}$ given in eq (\ref{RawKeyState}), Alice and Bob are able to extract a key of length
\begin{equation}
\ell\leq H^{\epsilon}_{\min}(\hat{Z}_1^n|\tilde{X}_1^nI_1^nE'_n)_{\sigma^\mathrm{QKD}{|\Omega_\mathrm{PE}\land \Omega_{\mathrm{EC}}}}-\mathrm{leak}_\mathrm{EC}-2\log\frac{1}{\epsilon_\mathrm{PA}}
\end{equation}
bits, which is $\frac{1}{2}\epsilon_\mathrm{PA}+2\epsilon_\mathrm{EC}$-sound. 
\end{corollary}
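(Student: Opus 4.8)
The plan is to prove the claimed $(\tfrac12\epsilon_{\mathrm{PA}}+2\epsilon_{\mathrm{EC}})$-soundness by invoking the decomposition already recorded above: a protocol that is $\epsilon^{\mathrm{sec}}$-secret and $\epsilon^{\mathrm{cor}}$-correct is $(\epsilon^{\mathrm{sec}}+\epsilon^{\mathrm{cor}})$-sound. I would therefore fix the output key length $\ell=\log|K|$ to the right-hand side of the stated bound, let the final keys be $K_B=F_{\mathrm{PA}}(\hat{Z}_1^n)$ and $K_A=F_{\mathrm{PA}}(\bar{X}_1^n)$ for a two-universal privacy-amplification hash $F_{\mathrm{PA}}$ drawn as in Proposition \ref{Furrer_Hash}, and show separately that $K_A=K_B$ up to error $\epsilon_{\mathrm{EC}}$ (correctness) and that $K_B$ is essentially uniform and decoupled from Eve up to error $\tfrac12\epsilon_{\mathrm{PA}}$ plus the error-correction bookkeeping (secrecy).

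Correctness is immediate from the hypotheses. Since $\Omega_{\mathrm{NA}}=\Omega_{\mathrm{PE}}\wedge\Omega_{\mathrm{EC}}$ we have $\Omega_{\mathrm{NA}}\subseteq\Omega_{\mathrm{EC}}$, and as $K_A,K_B$ are obtained by applying the \emph{same} hash $F_{\mathrm{PA}}$ to the reconciled raw keys, a disagreement $K_A\neq K_B$ forces $\bar{X}_1^n\neq\hat{Z}_1^n$. Hence $\Pr[K_A\neq K_B\wedge\Omega_{\mathrm{NA}}]\leq\Pr[\bar{X}_1^n\neq\hat{Z}_1^n\wedge\Omega_{\mathrm{EC}}]\leq\epsilon_{\mathrm{EC}}$ by the assumed reverse-reconciliation guarantee, giving $\epsilon^{\mathrm{cor}}=\epsilon_{\mathrm{EC}}$.

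For secrecy I would work directly with the subnormalised conditional state $\Pr[\Omega_{\mathrm{NA}}]\,\sigma^{\mathrm{QKD}}|_{\Omega_{\mathrm{NA}}}$ read off from \eqref{RawKeyState}; this is permissible because Proposition \ref{Furrer_Hash} is formulated for subnormalised cq-states, and it conveniently absorbs the prefactor $\Pr[\Omega_{\mathrm{NA}}]$ appearing in the secrecy definition \eqref{def:secret}. Applying the leftover hash Lemma to the extraction of $K_B$ from $\hat{Z}_1^n$, with Eve holding all her classical and quantum side information $\tilde{X}_1^n I_1^n E'_n L\hat{F}$, bounds the secrecy error by $\tfrac12\sqrt{|K|\,2^{-H^\epsilon_{\min}(\hat{Z}_1^n|\tilde{X}_1^n I_1^n E'_n L\hat{F})}}+2\epsilon$, the min-entropy being evaluated on that subnormalised state. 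I would then peel the leaked register $L$ off the conditioning using the chain rule for classical side information $H^\epsilon_{\min}(A|BC)\geq H^\epsilon_{\min}(A|B)-\log|C|$, which costs exactly $\mathrm{leak}_{\mathrm{EC}}=\log|L|$. Finally, substituting $\log|K|=\ell=H^\epsilon_{\min}(\hat{Z}_1^n|\tilde{X}_1^n I_1^n E'_n)-\mathrm{leak}_{\mathrm{EC}}-2\log\tfrac1{\epsilon_{\mathrm{PA}}}$ makes the radicand equal to $\epsilon_{\mathrm{PA}}^2$, so the square-root term collapses to $\tfrac12\epsilon_{\mathrm{PA}}$, reproducing the claimed key length and the $-2\log\tfrac1{\epsilon_{\mathrm{PA}}}$ correction.

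The step I expect to be the main obstacle is the treatment of the verification hash $\hat{F}$ and the associated parameter bookkeeping. Eve genuinely learns $\hat{F}=F(\hat{Z}_1^n)$, yet the target formula contains neither $\hat{F}$ in the conditioning nor a $\log|\hat{F}|$ penalty in the entropy; the resolution is that, conditioned on $\Omega_{\mathrm{EC}}$, Bob's hash coincides with Alice's except on an event of probability at most $\epsilon_{\mathrm{EC}}$, so disclosing $\hat{F}$ perturbs the secrecy only at the level of the correctness parameter rather than through a chain-rule loss. This is precisely what produces the \emph{second} $\epsilon_{\mathrm{EC}}$ in the soundness (the first coming from correctness), and carefully tracking the smoothing term, the passage between $\sigma^{\mathrm{QKD}}|_{\Omega_{\mathrm{NA}}}$ and its subnormalised form, and the reduction of the conditioning to $\Omega_{\mathrm{PE}}\wedge\Omega_{\mathrm{EC}}$ as it appears in the statement is where the bulk of the care lies. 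These manipulations are standard in composable finite-key analyses, so no genuinely new ingredient beyond Proposition \ref{Furrer_Hash} and the min-entropy chain rule should be required.
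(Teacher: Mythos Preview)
Your proposal is correct and takes essentially the same route as the paper: apply Proposition~\ref{Furrer_Hash} to the subnormalised state conditioned on $\Omega_{\mathrm{PE}}\wedge\Omega_{\mathrm{EC}}$, peel off the leaked register $L$ via the min-entropy chain rule, and deduce correctness from $\Omega_{\mathrm{NA}}\subseteq\Omega_{\mathrm{EC}}$. The one place you diverge is the verification hash $\hat F$, which you flag as the main obstacle and propose to absorb into a second $\epsilon_{\mathrm{EC}}$ via the correctness trick; the paper's proof, by contrast, simply omits $\hat F$ from Eve's conditioning altogether (taking $E=\tilde X_1^nI_1^nE'_nL$) and concludes $(\tfrac12\epsilon_{\mathrm{PA}}+2\epsilon+\epsilon_{\mathrm{EC}})$-soundness directly, so the difficulty you anticipate does not actually feature in the paper's argument.
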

\begin{proof}
Setting $|K|=2^\ell$, where $\ell$ is the number of bits in the $K$ register, and using the subnormalised state $\omega=\sigma^\mathrm{QKD}{|\Omega_\mathrm{PE}\land \Omega_{\mathrm{EC}}}$, identifying subsystems $Z=\hat{Z}_1^n$, $K=K_B$ and $E=\tilde{X}_1^nI_1^nE'_nL$, Proposition \ref{Furrer_Hash} implies that
\begin{equation}
\left\|\sigma^{\mathrm{QKD},\mathcal{F}}_{K_B\tilde{X}_1^nI_1^nE'_nL}{|\Omega_\mathrm{PE}\land \Omega_{\mathrm{EC}}}-\frac{1}{2^\ell}\mathbb{1}_{K_B}\otimes\sigma^{\mathrm{QKD}}_{\tilde{X}_1^nI_1^nE'_nL}{|\Omega_\mathrm{PE}\land \Omega_{\mathrm{EC}}}\right\|_1\leq\sqrt{2^{\ell-H_{\min}^\epsilon (\hat{Z}_1^n|\tilde{X}_1^nI_1^nE'_nL)_{\sigma^\mathrm{QKD}|\Omega_\mathrm{PE}\land \Omega_{\mathrm{EC}}}}}+4\epsilon,
\end{equation}
where we use the notation $\sigma|\Omega_\mathrm{PE}\land\Omega_\mathrm{EC}:=\frac{1}{\Pr[\Omega_\mathrm{PE}]}\sigma\land\Omega_\mathrm{NA}$. Hence for $\ell\leq H_{\min}^\epsilon (\hat{Z}_1^n|\tilde{X}_1^nI_1^nE'_nL)_{\sigma^\mathrm{QKD}|\Omega_\mathrm{PE}\land \Omega_{\mathrm{EC}}}-2\log\frac{1}{\epsilon_\mathrm{PA}}$ it holds
\begin{align}
    &\Pr[\Omega_\mathrm{NA}]\frac{1}{2}\left\|\sigma^{\mathrm{QKD},\mathcal{F}}_{K_B\tilde{X}_1^nI_1^nE'_nL}|_{\Omega_{\mathrm{NA}}}-\frac{1}{2^\ell}\mathbb{1}_{K_B}\otimes\sigma^{\mathrm{QKD}}_{\tilde{X}_1^nI_1^nE'_nL}|_{\Omega_{\mathrm{NA}}}\right\|_1\\
&\leq\Pr[\Omega_\mathrm{EC}|\Omega_\mathrm{PE}]\frac{1}{2}\left\|\sigma^{\mathrm{QKD},\mathcal{F}}_{K_B\tilde{X}_1^nI_1^nE'_nL}|_{\Omega_{\mathrm{NA}}}-\frac{1}{2^\ell}\mathbb{1}_{K_B}\otimes\sigma^{\mathrm{QKD}}_{\tilde{X}_1^nI_1^nE'_nL}|_{\Omega_{\mathrm{NA}}}\right\|_1\\
&=\frac{1}{2}\left\|\sigma^{\mathrm{QKD},\mathcal{F}}_{K_B\tilde{X}_1^nI_1^nE'_nL}{|\Omega_\mathrm{PE}\land \Omega_{\mathrm{EC}}}-\frac{1}{2^\ell}\mathbb{1}_{K_B}\otimes\sigma^{\mathrm{QKD}}_{\tilde{X}_1^nI_1^nE'_nL}{|\Omega_\mathrm{PE}\land \Omega_{\mathrm{EC}}}\right\|_1\\
&\leq \frac{1}{2}\epsilon_\mathrm{PA}+2\epsilon,
\end{align}
Hence, the key is $(\frac{1}{2}\epsilon_\mathrm{PA}+2\epsilon)$-secret by eq. (\ref{def:secret}). Since $\Pr[K_A\neq K_B\land\Omega_\mathrm{NA}]\leq\Pr[K_A\neq K_B\land\Omega_\mathrm{EC}]\leq \epsilon_\mathrm{EC}$, the protocol is also $\epsilon_\mathrm{EC}$-correct, and therefore $(\frac{1}{2}\epsilon_\mathrm{PA}+2\epsilon+\epsilon_\mathrm{EC})$-sound. Since $|L|$ is finite, we can use the chain rule of the smooth min entropy, Lemma 4.5.6, as well as Lemma 4.5.7 in \cite{furrer2012security}, and obtain
\begin{equation}
    H_{\min}^\epsilon (\hat{Z}_1^n|\tilde{X}_1^nI_1^nE'_nL)_{\sigma^\mathrm{QKD}|\Omega_\mathrm{PE}\land \Omega_{\mathrm{EC}}}\geq H_{\min}^\epsilon (\hat{Z}_1^n|\tilde{X}_1^nI_1^nE'_n)_{\sigma^\mathrm{QKD}|\Omega_\mathrm{PE}\land \Omega_{\mathrm{EC}}}-\log|L|,
\end{equation}
finishing the proof.
\end{proof}

Corollary \ref{renner}, reduces the soundness part to finding a lower bound on $H_{\min}^\epsilon (\hat{Z}_1^n|\tilde{X}_1^nI_1^nE'_n)_{\sigma^\mathrm{QKD}|\Omega_\mathrm{PE}\land \Omega_{\mathrm{EC}}}$. 
We will now show that conditioning on $\land\Omega_\mathrm{EC}$ can only increase the smooth min entropy. For that purpose we show that Lemma 10 in \cite{tomamichel2017largely} also holds in infinite dimensions.

\begin{lemma}\label{lemma10vN}
    Let $\rho_{ABXY}$ be a sub-normalized cq-state the quantum part of which acts on a possibly infinite dimensional Hilbert space $\mathcal{H}_{AB}$, let $\Omega$ be an event defined on $XY$, and let $\epsilon>0$. Then it holds
    \begin{equation}
        H^\epsilon_{\min}(AX|B)_{\rho\land\Omega}\geq H^\epsilon_{\min}(AX|B)_{\rho}.
    \end{equation}
\end{lemma}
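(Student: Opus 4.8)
The plan is to reduce the statement to a monotonicity property of the smooth min-entropy under the positive-semidefinite order, and then to lift the finite-dimensional argument of \cite[Lemma 10]{tomamichel2017largely} to the separable infinite-dimensional setting. First I would record the operator inequality induced by the event. Since $\Omega$ is an event on the classical registers $XY$, it is implemented by the projector $\Pi_\Omega=\sum_{(x,y)\in\Omega}\pro{x}_X\otimes\pro{y}_Y$, which commutes with the cq-state $\rho_{ABXY}$. Hence $\rho_{\land\Omega}=\Pi_\Omega\rho\Pi_\Omega=\rho^{1/2}\Pi_\Omega\rho^{1/2}\leq\rho$ as operators, and applying the partial trace over $Y$ (a positive map) gives $\Tr_Y[\rho_{\land\Omega}]\leq\Tr_Y[\rho]=\rho_{ABX}$. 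Writing $\tau_{ABX}:=\Tr_Y[\rho_{\land\Omega}]$, block-diagonality in $X$ shows the inequality holds blockwise, so the whole statement becomes the claim that, for sub-normalized cq-states with $\tau_{ABX}\leq\rho_{ABX}$, one has $H^\epsilon_{\min}(AX|B)_\tau\geq H^\epsilon_{\min}(AX|B)_\rho$.

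I would establish this monotonicity through the variational (dual) characterisation $2^{-H^\epsilon_{\min}(AX|B)_\rho}=\min\{\Tr[\sigma_B]:\bar\rho\in\mc{B}^\epsilon(\rho_{ABX}),\ \bar\rho\leq\mathbb{1}_{AX}\otimes\sigma_B\}$, noting that the optimiser $\bar\rho$ may be taken cq in $X$ by dephasing, which neither increases the purified distance to $\rho_{ABX}$ nor the objective. For $\epsilon=0$ the inequality is immediate: any feasible $\sigma_B$ for $\rho$ satisfies $\tau_{ABX}\leq\rho_{ABX}\leq\mathbb{1}_{AX}\otimes\sigma_B$ and is hence feasible for $\tau$, so the minimum for $\tau$ is no larger. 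For $\epsilon>0$ the content of \cite[Lemma 10]{tomamichel2017largely} is to produce, from an optimal smoothing $\bar\rho$ of $\rho_{ABX}$ together with its dual $\sigma_B$, a state $\bar\tau\in\mc{B}^\epsilon(\tau)$ that is still dominated by $\mathbb{1}_{AX}\otimes\sigma_B$: the dual feasibility is inherited from the order $\tau\leq\rho$, while the purified distance is controlled by data processing of the generalised fidelity $F$ under the trace-non-increasing completely positive filter relating $\tau$ to $\rho$.

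The actual work lies in arguing that this chain of manipulations survives in the separable, possibly infinite-dimensional Hilbert space $\mc{H}_{AB}$. In finite dimensions one freely uses compactness of the $\epsilon$-ball and of the set of dual variables to guarantee attainment of the optima, together with finite-dimensional SDP duality; none of these are automatic here. I would instead approximate $\rho_{ABX}$ by its finite-rank truncations $P_d\,\rho_{ABX}\,P_d$ onto the first $d$ Fock levels, which converge in trace norm, and hence in purified distance, by separability (as in the footnote on finite-rank approximation). Conjugation by $P_d$ is positive, so it preserves both the order $\tau\leq\rho$ and the cq-structure; I would prove the inequality for each truncation using the finite-dimensional argument and then pass to the limit using the (semi)continuity of $H^\epsilon_{\min}(AX|B)$ in the state. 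Throughout I would rely on the von Neumann algebra formulation of the smooth entropies and of the duality/leftover-hashing machinery developed in \cite{furrer2012security,berta2016smooth}, in which the relevant fidelity, purified-distance and data-processing statements already hold at the required level of generality.

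The step I expect to be the main obstacle is precisely this smoothing in the infinite-dimensional regime: one must exhibit the smoothed, dual-feasible state $\bar\tau$ for $\tau$ without appealing to compactness to extract a limit of near-optimisers, and verify that the operator-order and generalised-fidelity estimates used in the finite-dimensional proof are stable under the truncation and limiting procedure. The difficulty is sharpened by the fact that $\Omega$ depends on the register $Y$, which is traced out in the entropy, so the smoothing and the conditioning act on different systems; their interplay cannot be handled by a direct commutation argument and must instead be routed entirely through the operator inequality $\tau_{ABX}\leq\rho_{ABX}$ together with the fidelity contraction under the associated filter.
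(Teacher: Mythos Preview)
Your reduction to the operator inequality $\tau_{ABX}\leq\rho_{ABX}$, with $\tau_{ABX}=\Tr_Y[\rho_{\land\Omega}]$, together with a ``trace-non-increasing CP filter relating $\tau$ to $\rho$'', is where the argument breaks. Once $Y$ has been traced out, there is in general no CPTNI map on $ABX$ alone that sends $\rho_{ABX}$ to $\tau_{ABX}$: the conditioning depends on $Y$ and cannot be implemented on the marginal. Even if one manufactures a contraction $K$ with $K\rho_{ABX}K^*=\tau_{ABX}$ (possible whenever $\tau\leq\rho$), applying it to the smoothed state $\bar\rho$ gives only $K\bar\rho K^*\leq\lambda\,K(\mathbb{1}_{AX}\otimes\sigma_B)K^*$, and $K(\mathbb{1}_{AX}\otimes\sigma_B)K^*$ is not dominated by $\mathbb{1}_{AX}\otimes\sigma_B$ in general. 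So the dual feasibility needed for $H_{\min}$ is lost precisely at the $\epsilon>0$ step you flag. This is also not how \cite[Lemma~10]{tomamichel2017largely} is proved: that argument keeps the $Y$ register in play throughout rather than working from the order relation on $ABX$.

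The paper supplies the idea your plan is missing. One takes a near-optimal smoothed state $\tilde\rho_{ABX}\in\mc{B}^\epsilon(\rho_{ABX})$ with dual witness $\sigma_B$, and then \emph{extends} $\tilde\rho_{ABX}$ to a state $\tilde\rho_{ABXY}$ that remains $\epsilon$-close to $\rho_{ABXY}$, via Uhlmann and the isometry relating purifications (Lemma~3.2.2 of \cite{furrer2012security}). At the $ABXY$ level the conditioning $\land\Omega$ is a bona fide CPTNI map, so data processing controls the purified distance, and after pinching $Y$ one obtains $\tilde\rho_{ABX\land\Omega}\leq\tilde\rho_{ABX}$ for the \emph{smoothed} state. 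This directly yields $\tilde\rho_{ABX\land\Omega}\leq\lambda\,\mathbb{1}_{AX}\otimes\sigma_B$ and hence the desired inequality. The order of operations---smooth on $ABX$, extend to $Y$, then condition---is essential and is exactly what your route through the bare inequality $\tau_{ABX}\leq\rho_{ABX}$ discards. As a side remark, the infinite-dimensional issues you plan to handle by truncation and limits are dealt with in the paper much more simply: one works with $\delta$-approximate optimisers throughout (so no suprema need be attained) and invokes the purified-distance and data-processing lemmas of \cite{furrer2012security} directly, with no compactness or continuity argument required.
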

The proof of Lemma \ref{lemma10vN} follows the one given in \cite{tomamichel2017largely}, but uses definitions and properties of the relevant quantities as provided in \cite{furrer2012security}.

\begin{proof}
    By definition of the smooth min entropy, eq. (\ref{def:smoothmin}), 
    for every $\delta_1>0$ there exists a subnormalised cq-state $\tilde\rho_{ABX}$ in the $\epsilon$-ball around $\rho_{ABX}$ such that $H_{\min}(AX|B)_{\tilde\rho}\leq H^\epsilon_{\min}(AX|B)_{\rho}\leq H_{\min}(AX|B)_{\tilde\rho}+\delta_1$. Further, by definition of the min entropy, eq. (\ref{def:minentropy}), 
    for every $\delta_2>0$ there exists a state $\sigma_B$ and $\lambda\in\mathbb{R}$ such that $\lambda\mathbb{1}_{AX}\otimes\sigma_B\geq\tilde\rho_{ABX}$ and
$\lambda-\delta_2\leq2^{-H_{\min}(AX|B)_{\tilde\rho}}\leq\lambda$. Hence it holds
\begin{equation}
        \tilde\rho_{ABX}\leq \(2^{-H^\epsilon_{\min}(AX|B)_{\rho}-\delta_1}+\delta_2\)\mathbb{1}_{AX}\otimes\sigma_B .
    \end{equation}
By definition of the purified distance there exist purifications $\ket{\tilde\rho_{ABXC}}$ and $\ket{\rho_{ABXC}}$ such that $P(\tilde\rho_{ABX},\rho_{ABX})=P(\ket{\tilde\rho_{ABXC}},\ket{\rho_{ABXC}})$. Consider now $\rho_{ABXY}$ and its purification $\ket{\rho_{ABXY\tilde{C}}}$. By Lemma 3.2.2 of \cite{furrer2012security} there exists an isometry $V$ such that $V\ket{\rho_{ABXY\tilde{C}}}=\ket{\rho_{ABXC}}$. Hence, since the purified distance is invariant under isometries 
and decreases under completely positive trace non-increasing maps (Lemma 3.4.5 of \cite{furrer2012security}), it holds
\begin{equation}
P(\tilde\rho_{ABX},\rho_{ABX})=P(\ket{\tilde\rho_{ABXC}},\ket{\rho_{ABXC}})=P(\ket{\tilde\rho_{ABXY\tilde C}},\ket{\rho_{ABXY\tilde C}})\geq P(\tilde\rho_{ABXY},\rho_{ABXY}),
\end{equation}
implying that $\tilde\rho_{ABXY}$ is in the $\epsilon$-ball around $\rho_{ABXY}$. Since pinching (i.e. the application of a projective measurement in the computational basis) $Y$ would only decrease the distance between $\tilde\rho_{ABXY}$ and $\rho_{ABXY}$, while leaving $\rho_{ABXY}$ invariant, we can, without loss of generality assume that $\tilde\rho_{ABXY}$ is classical on $Y$. We can now write $\tilde\rho_{ABXY}=\tilde\rho_{ABXY\land\Omega}+\tilde\rho_{ABXY\land\neg\Omega}$, where $\tilde\rho_{ABXY\land\Omega}\geq0$ and $\tilde\rho_{ABXY\land\neg\Omega}\geq0$. Consequently, it holds $\tilde\rho_{ABXY\land\Omega}\leq\tilde\rho_{ABXY}$. Since the partial trace is completely positive, this implies
\begin{equation}
    \tilde\rho_{ABX\land\Omega}\leq\tilde\rho_{ABX}\leq \(2^{-H^\epsilon_{\min}(AX|B)_{\rho}-\delta_1}+\delta_2\)\mathbb{1}_{AX}\otimes\sigma_B.
\end{equation}
Hence, by the definition of the min-entropy, it holds in the limit $\delta_1,\delta_2\to0$,
\begin{equation}
    H^\epsilon_{\min}(AX|B)_{\rho_{ABX}}\leq H_{\min}(AX|B)_{\tilde\rho_{ABX\land\Omega}}.
\end{equation}
By considering the completely positive trace non-increasing map transforming $\rho_{ABXY}$ into $\rho_{ABXY\land\Omega}$ and $\tilde\rho_{ABXY}$ into $\tilde\rho_{ABXY\land\Omega}$, we can again make use of Lemma 3.4.5 of \cite{furrer2012security} to show that 
\begin{equation}
    P\(\rho_{ABX\land\Omega},\tilde\rho_{ABX\land\Omega}\)\leq P\(\rho_{ABXY\land\Omega},\tilde\rho_{ABXY\land\Omega}\)\leq  P\(\rho_{ABXY},\tilde\rho_{ABXY}\)\leq\epsilon.
\end{equation}
By definition of the smooth min-entropy it then holds
\begin{equation}
    H_{\min}(AX|B)_{\tilde\rho_{ABX}\land\Omega}\leq H^\epsilon_{\min}(AX|B)_{\rho_{ABX}\land\Omega},
\end{equation}
finishing the proof.
\end{proof}

 In order to apply the GEAT, let us define an output variable  $O_i:=\hat{Z}_i$ as well as $E_i:=I_1^iC_1^iE'_i$, 
 (recall that $C_i=\tilde{X}_i\tilde{Z}_i$), which can be interpreted as Eve's side information. In fact, with $\tilde{Z}_i$ we give Eve more information than necessary by the protocol. For the entropy accumulation, it is however necessary that the information $C_i$ used for statistical tests is obtainable by means of projective measurements on $O_iE_i$, so $\tilde{Z}_i$ has to be included into either $O_i$ or $E_i$.

Let us now define a set of GEAT channels, i.e. CPTP maps $\{\mc{M}^i_{E_{i-1}\to O_iC_iE_i}\}_{i=1}^n$, such that each map $\mc{M}^i_{E_{i-1}\to O_iC_iE_i}$ includes steps (1) - (3) of the actions by Alice and Bob taken during round $i$, and also includes the attack map $\tilde{\mc{N}}^i_{A_i'E'_{i-1}\to B_iE'_i}$ which is performed by Eve. 
Explicitly, the channels take the form
\begin{equation}\label{M}
\mc{M}^{i}(\cdot)=p^\mathrm{K}\mc{M}^i_\mathrm{K}(\cdot)\otimes\pro{\perp}_{C_i}\otimes\pro{0}_{I_i}+(1-p^\mathrm{K})\mc{M}^i_\text{test}(\cdot)\otimes\pro{\perp}_{O_i}\otimes\pro{1}_{I_i},
\end{equation}
where (omitting identities, and double subsystems),
\begin{align}
\mc{M}_\mathrm{K}^i\(\rho_{E_{i-1}}\)&=\frac{1}{4}\sum_{(x,z)=(0,0)}^{(3,3)}\Tr_{B_i}\left[R^z_{B_i}\(\tilde{\mc{N}}^i_{A'_iE'_{i-1}\to B_iE'_i}\(\varphi^{x}_{A_i}\otimes\rho_{E_{i-1}}\)\)\right]\otimes\pro{z}_{O_i}\label{Mkey}\\
\mc{M}_\mathrm{test}^i\(\rho_{E_{i-1}}\)&=\frac{1}{4}\sum_{(x,z)\in\tilde{\mathcal{C}}}\Tr_{B_i}\left[\tilde{R}^z_{B_i}\(\tilde{\mc{N}}^i_{A'_iE'_{i-1}\to B_iE'_i}\(\varphi^{x}_{A_i}\otimes\rho_{E_{i-1}}\)\)\right]\otimes\pro{xz}_{C_i},\label{Mtest}
\end{align}
and we define the region operators for key rounds as
\begin{equation}\label{R}
R^z_B=\frac{1}{\pi}\int_{0}^{\infty}\int_{\frac{\pi (2z-1)}{4}}^{\frac{\pi (2z+1)}{4}}\gamma\pro{\gamma e^{i\theta}}d\theta d\gamma,
\end{equation}
for $z\in\{0,\ldots,3\}$, and the region operators for test rounds as
\begin{subequations}\label{eq:OperatorsPE}
\begin{align}
     \tilde{R}^z_B &= \frac{1}{\pi}\int_{0}^\delta \int_{\frac{\pi(2z-1)}{4}}^{\frac{\pi(2z+1)}{4}} \gamma\pro{\gamma e^{i\theta}}d\theta d\gamma,  && z = 0,\ldots,3,  \\
     \tilde{R}^z_B &= \frac{1}{\pi}\int_\delta^\Delta \int_0^{2 \pi} \gamma\pro{\gamma e^{i\theta}}d\theta d\gamma,  && z = 4,  \\
     \tilde{R}^z_B &= \frac{1}{\pi}\int_\Delta^\infty \int_0^{2 \pi} \gamma\pro{\gamma e^{i\theta}}d\theta d\gamma,  && z = 5.
\end{align}
\end{subequations}


Each operator represents the fragment of the phase space corresponding to a module $z \in \{0,\ldots,5\}$, defined according to the intervals in \eqref{eq:discPE}. Note that by using the region operators, we have merged Bob's heterodyne measurement and discretisation. By definition, an $n$-fold application of the maps $\mc{M}^i_{E_{i-1}\to O_iC_iE_i}$ results in a state that equals the state of the protocol after steps  (1) - (3) on the marginals $O_1^nC_1^nE_1^n$, namely
\begin{equation}\label{eq:15}
\sigma^\mathrm{QKD}_{O_1^nC_1^nE_n}=\mc{M}^n\circ\ldots\circ\mc{M}^1(\rho_{E_0}),
\end{equation}
where Eve's original state $\rho_{E_0}$ can be chosen trivial. As the event $\Omega_\mathrm{PE}$ is defined on $C_1^n$, it then follows from eq. (\ref{eq:15}) that
\begin{align}
\sigma^\mathrm{QKD}_{O_1^nC_1^nE}|_{\Omega_{\mathrm{PE}}}&=\mc{M}^n\circ\ldots\circ\mc{M}^1(\rho_{E_0})|_{\Omega_{\mathrm{PE}}}.\label{eq_32}
\end{align}

We note that as our channels $\mc{M}^i_{E_{i-1}\to O_iC_iE_i}$ contain no $R_i$ input or outputs, the non-signalling condition given by eq. (1.2) of \cite{metger2022generalised} is trivially satisfied.  In order to apply the GEAT, we further need to introduce the concept of a min-tradeoff function. A \emph{min-tradeoff function} $f:\mc{P}_{\mc{C}}\to\mathbb{R}$ is defined as a real valued function on the distributions of the alphabet $\mathcal{C}$, which we have defined in eq. (\ref{def:alphabetC}), such that for all $i=1,\ldots,n$ it holds 
\begin{equation}\label{eq:MinTrade}
f(p)\leq \min_{\ket{\rho}\in\Sigma_i( p)}H(O_i|E_i\tilde{E}_{i-1})_{\mc{M}^i\(\rho_{E_{i-1}\tilde{E}_{i-1}}\)},
\end{equation}
where $\tilde{E}_{i-1}$ can be chosen isomorphic to $E_{i-1}$ (noting that for our choice of $\mc{M}^i$, $R_{i-1}$ is trivial), and we have defined
\begin{equation}\label{def:Sigmai}
\Sigma_i(p)=\left\{\ket{\rho}_{E_{i-1}\tilde{E}_{i-1}}\in\mc{H}(E_{i-1}\tilde{E}_{i-1}): \Tr\left[\pro{c}_{C_i}\otimes\id_{O_iE_i}\mc{M}^i\(\rho_{E_{i-1}\tilde{E}_{i-1}}\)\right]= p(c)\;\forall c\in\mc{C} \right\}.
\end{equation}
The state $\rho$ can be chosen pure by strong subadditivity as remarked in \cite{dupuis2016entropy}. We can now reformulate the GEAT \cite[Theorem 4.3]{metger2022generalised} in the following way:

\begin{proposition}\emph{\cite{metger2022generalised}}\label{EAT2alpha}
Let $n\in\mathbb{N}$, $\epsilon\in(0,1)$ and $a\in(1,3/2)$. Let further $p_0$ be given by eqs. (\ref{eq:p0PE},\ref{eq:p0PE2}) and let $\Omega_{\mathrm{PE}}$ be the event defined by eq. \eqref{eq:OmegaPE} for some $\delta^\mathrm{tol}>0$, and an affine  min-tradeoff function $f$ such that $f(p)=f^\mathrm{PE}(p)+\mathrm{const}$. Then it holds for any $a\in(1,3/2)$, 
\begin{equation}\label{eq39}
H^\epsilon_{\min}(O_1^n|C_1^nE_n)_{\mc{M}^n\circ\ldots\circ\mc{M}^1(\rho_{E_0})|_{\Omega_{\mathrm{PE}}}}\geq nf(p_0)-n\(\delta^\mathrm{tol}+\frac{a-1}{2-a}\frac{\ln2}{2}V^2+\(\frac{a-1}{2-a}\)^2K_a\)-\frac{\Xi(\epsilon)}{a-1} -\frac{a}{a-1}\log\frac{1}{\Pr[\Omega_{\mathrm{PE}}]},
\end{equation}
where $\Xi(x)=-\log \left(1-\sqrt{1-x^2}\right)$, and we have defined
\begin{align}
V&=\sqrt{\mathrm{Var}(f)+2}+\log(2d_O^2+1),\label{eq:V}\\
K_a&=\frac{(2-a)^3}{6(3-2a)^3\ln2}2^{\frac{a-1}{2-a}(2\log d_O+\max(f)-\min_\Sigma(f))}\ln^3\(2^{2\log d_O+\max(f)-\min_\Sigma(f)}+e^2\),\label{eq:Ka}
\end{align}
where $\max(f)=\max_{p\in\mathcal{P}_{\mathcal{C}}}f(p)$ and $\min_\Sigma(f)=\min_{p:{\Sigma}\neq\text{\O}}f(p)$ and $\mathrm{Var}(f)$ denotes the variance of $f$.

\end{proposition}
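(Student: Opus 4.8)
The plan is to obtain \eqref{eq39} as a direct specialisation of the generalised entropy accumulation theorem \cite[Theorem 4.3]{metger2022generalised} to the channels $\mc{M}^i$ of \eqref{M}--\eqref{Mtest}, followed by a single simplification that exploits the affine structure of $f$. First I would confirm that the hypotheses of the GEAT are met. The maps $\mc{M}^i_{E_{i-1}\to O_iC_iE_i}$ have the required form of GEAT channels, producing in each round the key output $O_i=\hat Z_i$, the classical test register $C_i=\tilde X_i\tilde Z_i$, and the updated side information $E_i=I_1^iC_1^iE'_i$; the non-signalling condition of \cite[eq.~(1.2)]{metger2022generalised} holds trivially because the channels carry no backward-in-time register (there is no $R_i$), as already remarked above. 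The output alphabet of $\hat Z_i$ is finite, so $d_O$ is well defined, and $f$ is assumed to be an affine min-tradeoff function of the prescribed form $f(p)=f^\mathrm{PE}(p)+\mathrm{const}$. Since $E_n$ contains $C_1^n$ by construction, the conditioning register $C_1^nE_n$ appearing in \eqref{eq39} coincides with the GEAT output conditioning register.

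With the hypotheses verified, I would invoke the GEAT for $a\in(1,3/2)$ to obtain a lower bound on $H^\epsilon_{\min}(O_1^n|C_1^nE_n)$, evaluated on the conditioned state $\mc{M}^n\circ\ldots\circ\mc{M}^1(\rho_{E_0})|_{\Omega_{\mathrm{PE}}}$, of the generic form
\begin{equation*}
n\min_q f(q)-n\(\frac{a-1}{2-a}\frac{\ln2}{2}V^2+\(\frac{a-1}{2-a}\)^2K_a\)-\frac{\Xi(\epsilon)}{a-1}-\frac{a}{a-1}\log\frac{1}{\Pr[\Omega_{\mathrm{PE}}]},
\end{equation*}
where the minimum runs over all frequency distributions $q$ consistent with the event $\Omega_{\mathrm{PE}}$. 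The remaining work in this step is to match the second-order coefficients of Theorem 4.3 to the explicit expressions \eqref{eq:V}--\eqref{eq:Ka}, substituting the finite output dimension $d_O$, the variance $\mathrm{Var}(f)$, and the range $\max(f)-\min_\Sigma(f)$, and to identify the smoothing term as $\Xi(\epsilon)=-\log(1-\sqrt{1-\epsilon^2})$; this part is bookkeeping.

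The only genuinely non-trivial step is to replace $\min_q f(q)$ by the closed form $f(p_0)-\delta^\mathrm{tol}$, and here I would use affinity. By the definition \eqref{eq:OmegaPE} of $\Omega_{\mathrm{PE}}$, every frequency distribution $q$ consistent with the event satisfies $f^\mathrm{PE}(q)\geq f^\mathrm{PE}(p_0)-\delta^\mathrm{tol}$; since $f=f^\mathrm{PE}+\mathrm{const}$, adding the same constant to both sides yields $f(q)\geq f(p_0)-\delta^\mathrm{tol}$ for all such $q$, and hence $\min_q f(q)\geq f(p_0)-\delta^\mathrm{tol}$. Substituting this lower bound produces the leading term $nf(p_0)-n\delta^\mathrm{tol}$ of \eqref{eq39}, and collecting the remaining terms gives the claimed inequality.

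I expect the main obstacle to be technical rather than conceptual: carefully confirming that the $\mc{M}^i$ qualify as GEAT channels---in particular that Eve holds only a single transmitted signal at any time, which is what justifies the sequential (Markovian) structure underlying the non-signalling condition, as discussed around \eqref{eq:15}---and that the possible infinite-dimensionality of $E'_n$ is admissible, since only the finite dimension $d_O$ of the key register enters the second-order terms $V$ and $K_a$. Given the surrounding construction these checks are routine, so the proposition follows by specialising \cite[Theorem 4.3]{metger2022generalised} and applying the affine reduction above.
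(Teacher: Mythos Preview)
Your proposal is correct and follows essentially the same route as the paper: the proposition is obtained by direct specialisation of \cite[Theorem 4.3]{metger2022generalised} to the channels $\mc{M}^i$, after noting that the non-signalling condition is trivially satisfied (no $R_i$ registers) and that the infinite-dimensional side information is handled by the extension of \cite{vNGEAT}. The affine reduction you spell out---that $f=f^\mathrm{PE}+\mathrm{const}$ together with the definition of $\Omega_\mathrm{PE}$ gives $f(\freq_{c_1^n})\geq f(p_0)-\delta^\mathrm{tol}$ for all $c_1^n\in\Omega_\mathrm{PE}$---is exactly the step the paper records in the footnote to Section \ref{sec:MinTrade}, so your argument simply makes explicit what the paper leaves implicit.
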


We note that, whereas \cite{metger2022generalised} does not explicitly use infinite dimensional Hilbert spaces for the involved quantum systems, the proof can be extended to include infinite dimensional Hilbert spaces \cite{vNGEAT}. We then obtain

\begin{theorem}{\bf(Soundness)}\label{FinteSizePhys}
Let  $n\in\mathbb{N}$. Let $\epsilon,\epsilon_{\mathrm{PE}},\epsilon_\mathrm{EC},\epsilon_\mathrm{PA}\in(0,1)$. Let $f$ be an affine min-tradeoff function of the form $f(p)=f^\mathrm{PE}(p)+\mathrm{const}$. Let $p_0$ be given by eqs. (\ref{eq:p0PE},\ref{eq:p0PE2}). Let $\delta^\mathrm{tol}>0$ and let $\Omega_\mathrm{PE}$, $\Omega_\mathrm{EC}$, and $\Omega_\mathrm{NA}$ be defined by eqs. (\ref{eq:OmegaPE},\ref{eq:OmegaEC},\ref{eq:OmegaNonAbort}). Let $\text{leak}_\text{EC}$ be the amount of information leaked during error correction. Then, if $\Pr_{\sigma^\mathrm{QKD}}[\Omega_\mathrm{PE}]\geq\epsilon_{\mathrm{PE}}$, for any $a\in(1,3/2)$, the protocol provides an $(3\epsilon+\epsilon_\mathrm{EC})$-sound key at rate $r=\ell/n$ with
\begin{align}
   r|_{\Omega_{\mathrm{NA}}}&\geq f(p_0) -\delta^\mathrm{tol}-\frac{a-1}{2-a}\frac{\ln2}{2}V^2-\(\frac{a-1}{2-a}\)^2K_a-\frac{1}{n}\left[\frac{\Xi(\epsilon)}{a-1}+\frac{a}{a-1}\log\frac{1}{ \epsilon_{\mathrm{PE}}}+\mathrm{leak}_\mathrm{EC}+2\log{\frac{1}{\epsilon_\mathrm{PA}}}\right]\label{eq:FiniteKeyRate},
\end{align}

\end{theorem}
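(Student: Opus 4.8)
The plan is to start from the leftover-hash bound of Corollary \ref{renner}, which already reduces soundness to lower-bounding the smooth min-entropy $H^\epsilon_{\min}(\hat{Z}_1^n|\tilde{X}_1^n I_1^n E'_n)_{\sigma^\mathrm{QKD}|_{\Omega_\mathrm{PE}\land\Omega_\mathrm{EC}}}$, and then to push this quantity through a chain of monotonicity steps until it matches the hypotheses of the GEAT in the form of Proposition \ref{EAT2alpha}. Recalling the convention $\sigma|_{\Omega_\mathrm{PE}\land\Omega_\mathrm{EC}} = \frac{1}{\Pr[\Omega_\mathrm{PE}]}\sigma\land\Omega_\mathrm{NA}$, the key-length inequality of Corollary \ref{renner} together with $\ell = rn$ already fixes the skeleton of \eqref{eq:FiniteKeyRate}, so the only real task is to estimate that entropy and then collect the additive error terms.

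First I would discard the error-correction conditioning. Since $\Omega_\mathrm{EC}$ is an event on the classical registers $\bar{X}_1^n\hat{Z}_1^n\bar{F}\hat{F}$, applying Lemma \ref{lemma10vN} with $\rho = \sigma^\mathrm{QKD}|_{\Omega_\mathrm{PE}}$ and $\Omega=\Omega_\mathrm{EC}$ (identifying $\hat{Z}_1^n$ as the classical register appearing in both the entropy and the event, and the remaining registers as the traced-out part) gives $H^\epsilon_{\min}(\hat{Z}_1^n|\tilde{X}_1^n I_1^n E'_n)_{\sigma|_{\Omega_\mathrm{PE}\land\Omega_\mathrm{EC}}} \geq H^\epsilon_{\min}(\hat{Z}_1^n|\tilde{X}_1^n I_1^n E'_n)_{\sigma|_{\Omega_\mathrm{PE}}}$; the shared normalisation by $\Pr[\Omega_\mathrm{PE}]$ is precisely what makes $\rho\land\Omega_\mathrm{EC}$ coincide with $\sigma|_{\Omega_\mathrm{PE}\land\Omega_\mathrm{EC}}$. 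Next I would align the conditioning systems with the GEAT channels: since the side information is $E_n = I_1^n C_1^n E'_n$ with $C_i=\tilde{X}_i\tilde{Z}_i$, it carries the extra register $\tilde{Z}_1^n$, and handing this register to Eve only enlarges the conditioning system, so monotonicity of the conditional min-entropy under additional conditioning yields $H^\epsilon_{\min}(\hat{Z}_1^n|\tilde{X}_1^n I_1^n E'_n)_{\sigma|_{\Omega_\mathrm{PE}}} \geq H^\epsilon_{\min}(O_1^n|C_1^n E_n)_{\sigma|_{\Omega_\mathrm{PE}}}$ with $O_i=\hat{Z}_i$. Finally, eq. \eqref{eq_32} identifies the marginal state $\sigma^\mathrm{QKD}_{O_1^n C_1^n E_n}|_{\Omega_\mathrm{PE}}$ with the GEAT output $\mc{M}^n\circ\ldots\circ\mc{M}^1(\rho_{E_0})|_{\Omega_\mathrm{PE}}$, which is legitimate precisely because $\Omega_\mathrm{PE}$ is defined purely on $C_1^n$.

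With the entropy now in the exact form required, I would invoke Proposition \ref{EAT2alpha} for the chosen affine min-tradeoff function $f(p)=f^\mathrm{PE}(p)+\mathrm{const}$, yielding the bound \eqref{eq39}. The hypothesis $\Pr_{\sigma^\mathrm{QKD}}[\Omega_\mathrm{PE}]\geq\epsilon_\mathrm{PE}$ then allows me to replace the data-dependent term $\frac{a}{a-1}\log\frac{1}{\Pr[\Omega_\mathrm{PE}]}$ by $\frac{a}{a-1}\log\frac{1}{\epsilon_\mathrm{PE}}$. Inserting this into the chain above, feeding the result into Corollary \ref{renner}, and dividing by $n$ reproduces \eqref{eq:FiniteKeyRate}. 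For the soundness constant, Corollary \ref{renner} produces an $(\tfrac12\epsilon_\mathrm{PA}+2\epsilon+\epsilon_\mathrm{EC})$-sound key, so identifying the privacy-amplification parameter with the smoothing parameter via $\epsilon_\mathrm{PA}=2\epsilon$ collapses the secrecy error to $3\epsilon$ and gives the claimed $(3\epsilon+\epsilon_\mathrm{EC})$-soundness.

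The main obstacle will be the correct alignment at the three monotonicity steps, rather than any single hard estimate: one must check that the removal of $\Omega_\mathrm{EC}$ via Lemma \ref{lemma10vN} is carried out consistently with the nonstandard normalisation $\tfrac{1}{\Pr[\Omega_\mathrm{PE}]}$, that the data-processing step moves $\tilde{Z}_1^n$ into the conditioning system in the direction that lower-bounds the entropy, and that the conditioned channel output \eqref{eq_32} genuinely coincides with the conditioned protocol state on the marginals $O_1^n C_1^n E_n$. Once these identifications are in place, Proposition \ref{EAT2alpha} performs the analytic work and everything else is bookkeeping of the additive terms.
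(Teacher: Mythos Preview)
Your proposal is correct and follows essentially the same route as the paper: remove the $\Omega_\mathrm{EC}$-conditioning via Lemma~\ref{lemma10vN}, invoke Proposition~\ref{EAT2alpha} on the state conditioned on $\Omega_\mathrm{PE}$, replace $\Pr[\Omega_\mathrm{PE}]$ by $\epsilon_\mathrm{PE}$, and then feed the resulting min-entropy bound into Corollary~\ref{renner}. You are in fact more explicit than the paper on two points it leaves implicit: the data-processing step that adds $\tilde{Z}_1^n$ to the conditioning before applying the GEAT (the paper silently passes from conditioning on $\tilde{X}_1^n I_1^n E'_n$ to $C_1^n E_n$), and the identification $\epsilon_\mathrm{PA}=2\epsilon$ needed to collapse Corollary~\ref{renner}'s soundness parameter $\tfrac12\epsilon_\mathrm{PA}+2\epsilon+\epsilon_\mathrm{EC}$ to the claimed $3\epsilon+\epsilon_\mathrm{EC}$.
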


\begin{proof}
By Lemma \ref{lemma10vN} and Proposition \ref{EAT2alpha}, 
\begin{align}
H^{\epsilon}_{\min}(O_1^n|\tilde{X}_1^nI_1^nE'_n)_{\sigma^\mathrm{QKD}|\Omega_\mathrm{PE}\land \Omega_{\mathrm{EC}}} &\geq
H^{\epsilon}_{\min}(O_1^n|\tilde{X}_1^nI_1^nE'_n)_{\sigma^\mathrm{QKD}|{ \Omega_{\mathrm{PE}}}}\\
&\geq nf(p_0)-n\(\delta^\mathrm{tol}+\frac{a-1}{2-a}\frac{\ln2}{2}V^2+\(\frac{a-1}{2-a}\)^2K_a\) \nonumber \\
& - \frac{\Xi(\epsilon)}{a-1}-\frac{a}{a-1}\log\frac{1}{\Pr_{\sigma^\mathrm{QKD}}[\Omega_{\mathrm{PE}}]},
\end{align}
Further, it holds by assumption that $\Pr_{\sigma^\mathrm{QKD}}[\Omega_{\mathrm{PE}}]\geq \epsilon_{\mathrm{PE}}$. Application of Proposition \ref{renner} completes the proof.
\end{proof}

\subsection{Completeness}\label{sec:compl}

We will now show the completeness of the protocol, i.e. we provide a lower bound on the probability that an honest application of the protocol does not get aborted $\Pr_{\sigma^\mathrm{QKD}}[\Omega_{\mathrm{NA}}]\big|_\mathrm{hon}$. This can be done along the lines of \cite{BPWFA23}. Abortion can occur after parameter estimation, or after  error correction. By the union bound it therefore holds
\begin{equation}
    1-\Pr[\Omega_{\mathrm{NA}}]_{\sigma^\mathrm{QKD}}\big|_\mathrm{hon}\leq1- \Pr[\Omega_\mathrm{PE}]_{\sigma^\mathrm{QKD}}\big|_\mathrm{hon}+1-\Pr[\Omega_\mathrm{EC}]_{\sigma^\mathrm{QKD}}\big|_\mathrm{hon}.
\end{equation}

Let us first consider parameter estimation. Let $p_0(x,z)$, a distribution over $(x,z) \in \tilde{\mathcal{C}}$ obtained in an honest implementation, which is of the form (\ref{eq:p0PE}) and let  $\freq_{c^n_1}$ be an observed frequency distribution. By the definition of the event $\Omega_\mathrm{PE}$, it holds 
\begin{align}
    &\Pr_{\sigma^\mathrm{QKD}}[\Omega_\mathrm{PE}]\big|_\mathrm{hon}
    \geq\Pr\left[\left|f^\mathrm{PE}(\freq_{c_1^n})-f^\mathrm{PE}(p_0)\right|\leq\delta^\mathrm{tol}\right]|_\mathrm{hon}.
\end{align}
In the case of an honest implementation, we can assume an iid structure. Hence the protocol can be described by $n$ independent multinoulli trials with parameter $p_0$. This allows us to apply the concentration presented in \cite{agrawal_optimistic_2023}, Proposition 2. To do so, let us define an ordering $(x,z)\to i$ for $i=1+z+6x$; 
we then set $\pi_i=p_0(x,z)$, $\hat{\pi}_i=\freq_{c_1^n}(x,z)$ and $h_i=h_{x,z}$ for $i\in\{1,\ldots,|\tilde{\mathcal{C}}|\}$. Further we define $\pi_{|\tilde{\mathcal{C}}|+1}:=1-\sum_{i=1}^{|\tilde{\mathcal{C}}|}\pi_i$, $\hat{\pi}_{|\tilde{\mathcal{C}}|+1}:=1-\sum_{i=1}^{|\tilde{\mathcal{C}}|}\hat{\pi}_i$ and $h_{|\tilde{\mathcal{C}}|+1}=0$. Let us also define
\begin{align}
&\gamma_i:=\frac{\pi_i\(1-\sum_{j=1}^i\pi_j\)}{1-\sum_{j=1}^{i-1}\pi_j},\label{eq:79a}\\
&c_i:=h_i-\frac{\sum_{j=i+1}^{|\tilde{\mathcal{C}}|+1}h_j\pi_j}{1-\sum_{j=1}^{i}\pi_j},\label{eq:79b}
\end{align}
for $i\in \{1,\ldots,|\tilde{\mathcal{C}}|\}$, as well as 
\begin{equation}\label{eq:79c}
D:=\max_{i,j\in\{1,\ldots,|\tilde{\mathcal{C}}|+1\}}|h_i-h_j|.
\end{equation}
Let now $\epsilon_\mathrm{PE}^c\in(0,1)$. By Proposition 2 of \cite{agrawal_optimistic_2023}, it then holds with probability $1-\epsilon_\mathrm{PE}^c$ that 
\begin{equation}
    \left|f^\mathrm{PE}(\freq_{c_1^n})-f^\mathrm{PE}(p_0)\right|=\left|(\hat{\pi}-\pi)^Th\right|\leq 2\sqrt{\log\(\frac{n}{\epsilon_\mathrm{PE}^c}\)\sum_{i=1}^{|\tilde{\mathcal{C}}|}\frac{\gamma_ic^2_i}{n}}+\frac{3D}{n}\log\frac{n}{\epsilon_\mathrm{PE}^c}.
\end{equation}
If we now choose the tolerance parameter $\delta^{\text{tol}}$ such that
\begin{equation}\label{eq:deltaPE}
   \delta^\mathrm{tol}\geq 2\sqrt{\log\(\frac{n}{\epsilon_\mathrm{PE}^c}\)\sum_{i=1}^{|\tilde{\mathcal{C}}|}\frac{\gamma_ic^2_i}{n}}+\frac{3D}{n}\log\frac{n}{\epsilon_\mathrm{PE}^c} ,
\end{equation}
we can show a completeness bound
\begin{equation}
\Pr\left[\left|f^\mathrm{PE}(\freq_{c_1^n})-f^\mathrm{PE}(p_0)\right|\leq\delta^\mathrm{tol}\right]\big|_\mathrm{hon}\geq1-\epsilon_\mathrm{PE}^c.
\end{equation}
As for the error correction, let $\epsilon_\mathrm{EC}^c\in(0,1)$, such that $1-\Pr[\Omega_\mathrm{EC}]|_\mathrm{hon}\leq \epsilon_\mathrm{EC}^c$, i.e. we assume that the error correction aborts with a probability of at most $\epsilon_\mathrm{EC}^c$. We thus obtain

\begin{theorem}\label{theo:complete}{\bf(Completeness)}
Let $\epsilon_\mathrm{PE}^c\in(0,1)$. Let $\Omega_\mathrm{PE}$ be as defined in eq. (\ref{eq:OmegaPE}), with $\delta^\mathrm{tol}$ as in eq. (\ref{eq:deltaPE}), respectively. Let further $\epsilon_\mathrm{EC}^c\in(0,1)$ be a suitable completeness parameter for the error correction protocol used. Then the protocol is $(\epsilon_\mathrm{PE}^c+\epsilon_\mathrm{EC}^c)$-complete, in the sense that $\Pr_{\sigma^\mathrm{QKD}}[\Omega_\mathrm{NA}]|_\mathrm{hon}\geq1-\epsilon_\mathrm{PE}^c-\epsilon_\mathrm{EC}^c$.
\end{theorem}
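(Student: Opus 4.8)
The plan is to establish the completeness of the full protocol by controlling the two ways it can abort---failure in parameter estimation and failure in error correction---and combining them via a union bound, exactly as the preamble to the theorem already sets up. I would begin by writing
\begin{equation*}
1-\Pr{}_{\sigma^\mathrm{QKD}}[\Omega_\mathrm{NA}]\big|_\mathrm{hon}\leq\(1-\Pr[\Omega_\mathrm{PE}]\big|_\mathrm{hon}\)+\(1-\Pr[\Omega_\mathrm{EC}]\big|_\mathrm{hon}\),
\end{equation*}
which follows since $\Omega_\mathrm{NA}=\Omega_\mathrm{PE}\wedge\Omega_\mathrm{EC}$ and $1-\Pr[A\wedge B]\leq(1-\Pr[A])+(1-\Pr[B])$. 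The error-correction term is immediate: by hypothesis we are given a completeness parameter $\epsilon^c_\mathrm{EC}$ for the chosen reconciliation scheme with $1-\Pr[\Omega_\mathrm{EC}]\big|_\mathrm{hon}\leq\epsilon^c_\mathrm{EC}$, so nothing further is needed there.

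The substantive content is the parameter-estimation term, and here I would follow the concentration argument already laid out before the statement. First I would observe that since $\Omega_\mathrm{PE}$ is the one-sided event $f^\mathrm{PE}(\freq_{c_1^n})\geq f^\mathrm{PE}(p_0)-\delta^\mathrm{tol}$, it is implied by the two-sided event $\lvert f^\mathrm{PE}(\freq_{c_1^n})-f^\mathrm{PE}(p_0)\rvert\leq\delta^\mathrm{tol}$, giving
\begin{equation*}
\Pr{}_{\sigma^\mathrm{QKD}}[\Omega_\mathrm{PE}]\big|_\mathrm{hon}\geq\Pr\left[\lvert f^\mathrm{PE}(\freq_{c_1^n})-f^\mathrm{PE}(p_0)\rvert\leq\delta^\mathrm{tol}\right]\big|_\mathrm{hon}.
\end{equation*}
In the honest implementation the rounds are i.i.d., so the test outcomes form $n$ independent multinoulli trials with parameter $p_0$; writing $f^\mathrm{PE}$ as the linear functional $(\hat\pi-\pi)^\mathrm{T}h$ under the ordering $i=1+z+6x$ puts the deviation in exactly the form treated by Proposition~2 of \cite{agrawal_optimistic_2023}. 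Invoking that concentration bound with the definitions of $\gamma_i$, $c_i$ and $D$ in eqs.~(\ref{eq:79a})--(\ref{eq:79c}), the deviation is at most the right-hand side of eq.~(\ref{eq:deltaPE}) except with probability $\epsilon^c_\mathrm{PE}$. Because $\delta^\mathrm{tol}$ is \emph{chosen} to dominate that right-hand side, the tolerance is never exceeded on the high-probability event, yielding $1-\Pr[\Omega_\mathrm{PE}]\big|_\mathrm{hon}\leq\epsilon^c_\mathrm{PE}$.

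Substituting the two bounds into the union bound gives $1-\Pr{}_{\sigma^\mathrm{QKD}}[\Omega_\mathrm{NA}]\big|_\mathrm{hon}\leq\epsilon^c_\mathrm{PE}+\epsilon^c_\mathrm{EC}$, which is precisely the claimed $(\epsilon^c_\mathrm{PE}+\epsilon^c_\mathrm{EC})$-completeness. The proof is essentially a bookkeeping assembly of facts already proved in the run-up to the statement, so I do not anticipate a serious obstacle; the only point deserving care is the passage from the one-sided abort condition to the two-sided concentration event, and verifying that the choice of $\delta^\mathrm{tol}$ in eq.~(\ref{eq:deltaPE}) is genuinely an upper bound on the concentration tail rather than merely comparable to it, so that the implication goes in the right direction.
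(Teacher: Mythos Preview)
Your proposal is correct and follows essentially the same approach as the paper: the union bound on $\Omega_\mathrm{NA}=\Omega_\mathrm{PE}\wedge\Omega_\mathrm{EC}$, the reduction of the one-sided abort condition to the two-sided deviation event, the i.i.d.\ structure of the honest run together with Proposition~2 of \cite{agrawal_optimistic_2023}, and the assumption on $\epsilon_\mathrm{EC}^c$ are exactly the ingredients the paper assembles in the paragraphs preceding the theorem. The theorem in the paper is stated without a separate proof block precisely because, as you observe, it is just the bookkeeping summary of those facts.
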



\subsection{The Min-Tradeoff Function}\label{sec:MinTrade}

We will now derive an affine min-tradeoff function of the form $f(p)=f^\mathrm{PE}(p)+\mathrm{const}$, with $f^\mathrm{PE}$ defined by eq. (\ref{eq:fPE})\footnote{In that case, it holds for all $c_1^n\in\Omega_\mathrm{PE}$ that $f(\freq_{c_1^n})\geq f(p_0)-\delta^\mathrm{tol}$.}. As in our protocol the number of key rounds is much larger than the number of test rounds, we make use of the infrequent sampling version of the GEAT. To do so, we decompose the channels $\mc{M}^i$ as follows
\begin{align}
\mc{M}^{i}(\cdot)&=p^\mathrm{K}\mc{M}^i_\mathrm{K}(\cdot)\otimes\pro{\perp}_{C_i}\otimes\pro{0}_{I_i}+(1-p^\mathrm{K})\mc{M}^i_\text{test}(\cdot)\otimes\pro{\perp}_{O_i}\otimes\pro{1}_{I_i}.
\end{align}
We then define a \emph{crossover min-tradeoff function} \cite{dupuis2019entropy} as a function  $g:\mc{P}_{\tilde{\mc{C}}}\to\mathbb{R}$ such that for all $i=1,\ldots,n$ it holds
\begin{equation}\label{eq:CrossMinTrade}
g(\tilde{p})\leq \min_{\ket{\rho}\in\tilde\Sigma_i(\tilde p)}H(O_i|E_i\tilde{E}_{i-1})_{\mc{M}^i\(\rho_{E_{i-1}\tilde{E}_{i-1}}\)},
\end{equation}
where $\tilde{E}_{i-1}$ can be chosen isomorphic to $E_{i-1}$, and where we have defined
\begin{equation}\label{Sigmatilde}
\tilde\Sigma_i(\tilde{p})=\left\{\ket{\rho_{E_{i-1}\tilde{E}_{i-1}}}\in\mc{H}(E_{i-1}\tilde{E}_{i-1}): \Tr\left[\pro{c}_{C_i}\id_{O_iE_i}\mc{M}^i_\text{test}\(\rho_{E_{i-1}\tilde{E}_{i-1}}\)\right]=\;\tilde{p}(c)\forall c\in\tilde{\mathcal{C}} \right\}.
\end{equation}
Note that this constraint set differs from the one given by eq. (\ref{def:Sigmai}) in that the constraints are only in terms of $\mc{M}^i_\text{test}$. According to Lemma V.5 of \cite{dupuis2019entropy}, we can transform a crossover min-tradeoff function $g$ into a min-tradeoff function $f$ as follows
\begin{align}
f(\delta_c)&=\max(g)+\frac{1}{1-p^\mathrm{K}}\(g(\delta_c)-\max(g)\)\;\forall c\in\tilde{\mc{C}},\label{def:f1}\\
f(\delta_{(\perp,\perp)})&=\max(g),\label{def:f2}
\end{align}
where $\delta_c$ is the distribution that is equal to $1$ for the value $c$ and $0$ for any other values. Let us also define $\max(g)=\max_{\tilde p\in\mc{P}_{\tilde{\mc{C}}}}g(\tilde p)$ and $\min(g)=\min_{\tilde p\in \mc{P}_{\tilde{\mc{C}}}}g(\tilde p)$. If $p$ is of the form $p(c)=(1-p^\mathrm{K})\tilde p(c)$ for $c\in\tilde{\mathcal{C}}$ and $p(\perp,\perp)=p^\mathrm{K}$, it holds $f((1-p^\mathrm{K})\tilde p)=g(\tilde p)$ for all $\tilde p\in\mc{P}_{\tilde{\mc{C}}}$. It can be shown that \cite{dupuis2019entropy}
\begin{align}
&\max(f)=\max(g)\label{e46}\\
&\min_\Sigma(f)\geq\min(g)\\
0\leq&\mathrm{Var}(f)\leq\frac{1}{1-p^\mathrm{K}}\(\max(g)-\min(g)\)^2\label{e48}.
\end{align}
We can now bound the higher order terms in eqs. (\ref{eq:V},\ref{eq:Ka}) in terms of $\max(g)$ and $\min(g)$,
\begin{align}
V\leq\tilde V&=\sqrt{\frac{1}{1-p^\mathrm{K}}\(\max(g)-\min(g)\)^2+2}+\log(2d_O^2+1),\label{eq:96}\\
K_a\leq\tilde K_a&=\frac{(2-a)^3}{6(3-2a)^3\ln2}2^{\frac{a-1}{2-a}(2\log d_O+\max(g)-\min(g))}\ln^3\(2^{2\log d_O+\max(g)-\min(g)}+e^2\).\label{eq:97}
\end{align}

Let now $i\in\{1,\ldots,n\}$ be an arbitrary round. In order to derive our crossover min-tradeoff function, we consider an entanglement-based version of a round of our QKD protocol, as defined by the source replacement scheme \cite{bennett1992quantum,Grosshans2003Virtual,curty2004entanglement,ferenczi2012symmetries}. Namely, in the entanglement-based picture of a round $i$ of the protocol, Alice prepares a pure state
\begin{equation}\label{eq:Initial}
\ket{\psi}_{A_iA'_i}=\frac{1}{2}\sum_{x=0}^3\ket{x}_{A_i}\ket{\varphi_x}_{A'_i},
\end{equation}
which does not depend on $i$. Alice then sends $A'_i$ to Bob via the noisy quantum channel $\mc{N}^i_{A'_i\to B_i}$, and keeps the $A_i$ subsystem. Alice then measures her $A_i$ subsystem in the computational basis. Namely, she projects onto $\{\pro{x}_{A_i}\}_{x=0}^3$, obtaining random variable $X_i$, whereas Bob measures his output $B_i$ as he would do in the prepare-and-measure protocol. The remainder of the round is the same as in the prepare-and-measure scenario.

Formally, the action of the channels $\mc{M}^{i}$, as described by eqs. (\ref{M}-\ref{Mtest}), can be described as action performed on an entangled state of the form (\ref{eq:Initial}). Namely,
\begin{align}
&\mc{M}_\mathrm{K}^i\(\rho_{E_{i-1}\tilde{E}_{i-1}}\)=\sum_{(x,z)=(0,0)}^{(3,3)}\Tr_{A_iB_i}\left[\pro{x}_{A_i}\otimes R^z_{B_i}\(\tilde{\mc{N}}^i_{{A'_i}E'_{i-1}\to {B_i}{E'_i}}\({\psi}_{A_iA'_i}\otimes\rho_{E_{i-1}\tilde{E}_{i-1}}\)\)\right]\otimes\pro{z}_{\hat{Z}_i},\label{Mkey2}\\
&\mc{M}_\mathrm{test}^i\(\rho_{E_{i-1}\tilde{E}_{i-1}}\)=\sum_{(x,z)\in \tilde{\mathcal{C}}}\Tr_{A_iB_i}\left[\pro{x}_{A_i}\otimes\tilde{R}^z_{B_i}\(\tilde{\mc{N}}^i_{A'_iE'_{i-1}\to B_iE'_i}\({\psi}_{A_iA'_i}\otimes\rho_{E_{i-1}\tilde{E}_{i-1}}\)\)\right]\otimes\pro{xz}_{\tilde{X}_i\tilde{Z}_i}\label{Mtest2},
\end{align}
and $\mc{M}^{i}\left(\rho_{E_{i-1}\tilde{E}_{i-1}}\right)$ follows from eq. (\ref{M}). Using (\ref{Mtest2}), we can now equivalently formulate the constraints of the set (\ref{Sigmatilde}) as
\begin{equation}
\Tr\left[\pro{x}_{A_i}\otimes\tilde{R}^z_{B_i}\(\tilde{\mc{N}}^i_{A'_iE'_{i-1}\to B_iE'_i}\({\psi}_{A_iA'_i}\otimes\rho_{E_{i-1}\tilde{E}_{i-1}}\)\)\right]=\tilde{p}(x,z) ,
\end{equation}
for all tuples $(x,z)\in \tilde{\mathcal{C}}$. Further, let us note that we can reformulate
\begin{equation}
    \mathcal{M}^i_{E_{i-1}\to \hat{Z}_iC_iE_i}(\cdot)=\id_{I_1^{i-1}C_1^{i-1}}\otimes\widetilde{\mc{M}}_{A_iB_i\to\hat{Z}_iC_iI_i}\circ\tilde{\mc{N}}^i_{A'_iE'_{i-1}\to B_iE'_i}\({\psi}_{A_iA'_i}\otimes\cdot\) ,
\end{equation}
where we have defined
the map $\widetilde{\mc{M}}_{AB\to\hat{Z}CI}$
\begin{align}
\widetilde{\mc{M}}(\cdot)&=p^\mathrm{K}\sum_{(x,z)=(0,0)}^{(3,3)}\Tr_{AB}\left[\pro{x}_{A}\otimes R^z_{B}(\cdot)\right]\otimes\pro{z}_{\hat{Z}}\otimes\pro{\perp}_{C}\otimes\pro{0}_{I}\nonumber\\
&+(1-p^\mathrm{K})\sum_{(x,z)\in \tilde{\mathcal{C}}}\Tr_{AB}\left[\pro{x}_{A}\otimes\tilde{R}^z_{B}(\cdot)\right]\otimes\pro{xz}_{C}\otimes\pro{\perp}_{\hat{Z}}\otimes\pro{1}_{I},\label{Mtilde}
\end{align}
which does not depend on $i$. In order to obtain a crossover-min-tradeoff function we can now relax the minimisation in (\ref{eq:CrossMinTrade}) to make it independent of Eve's attack map $\tilde{\mc{N}}^i_{A'_iE'_{i-1}\to B_iE'_i}$. Namely, we consider the relaxation 
\begin{equation}\label{74}
\min_{\ket{\omega}_{AB\hat{E}}\in\Sigma'_{\hat{E}}(\tilde p)} H(\hat{Z}|IC\hat{E})_{\widetilde{\mc{M}}_{AB\to\hat{Z}CI}(\omega_{AB\hat{E}})}\leq\min_{\ket{\rho}\in\tilde\Sigma_i(\tilde p)}H(\hat{Z}_i|E_i\tilde{E}_{i-1})_{\mc{M}^i\(\rho_{E_{i-1}\tilde{E}_{i-1}}\)},
\end{equation}
where we have defined $\hat{E}:=I_1^{i-1}C_1^{i-1}E'_i\tilde{E}_{i-1}$ and
\begin{equation}\label{SigmaEprime}
\Sigma'_{\hat{E}}(\tilde p)=\left\{\ket{\omega}_{AB\hat{E}}\in\mc{H}(AB\hat{E}):\Tr\left[\left(\pro{x}_{A}\otimes\tilde{R}^z_{B}\right)\omega_{AB}\right]=\tilde{p}(x,z)\;\forall(x,z)\in\tilde{\mathcal{C}}\land\omega_{A}=\frac{1}{4}\sum_{i,j=0}^3\<\varphi_j|\varphi_i\>\ket{i}\bra{j}_A\right\}
\end{equation}
represents the feasible set (i.e., the set of all states compatible with the constraints imposed by the statistical test). Let us note that the constraint on Alice's marginal $A$ in (\ref{SigmaEprime}) can be added since the $A$ system never leaves Alice's lab, i.e. is not affected by Eve's attack map. Moreover, since the entanglement of the state (\ref{eq:Initial}) cannot be certified from Alice and Bob's statistics alone, the constraint on Alice's marginal is necessary to obtain nontrivial results \cite{BPWFA23}. 
Following \cite{lin2019asymptotic,BPWFA23}, we now consider a coherent version of the map $\widetilde{M}$ defined in (\ref{Mtilde}), corresponding to a coherent version of a round of the protocol. In a coherent version of a round of the protocol, Alice and Bob's measurements are performed coherently, i.e. by an isometry writing the result into a quantum register, but not yet depahsing it. This results in a final pure state of the protocol, allowing us to apply \cite[Theorem 1]{coles2012unification}, which removes the dependence on Eve's subsystems. The coherent version of a $\widetilde{M}$ can be expressed in terms of a CP map $\mc{G}:AB\to AB\hat{Z}$, given by a single Kraus operator
\begin{equation}\label{def:G}
G=\mathbb{1}_A\otimes\sum_{z=0}^3\sqrt{R^z_B}\otimes\ket{z}_{\hat{Z}},
\end{equation}
where the region operators $R^z$ are defined by eq. (\ref{R}). We can then measure the $\hat{Z}$ register in the computational basis to dephase the $\hat{Z}$ register. This operation $\mc{Z}:\hat{Z}\to\hat{Z}$, also known as pinching, can be described in terms of the Kraus operators 
\begin{align}\label{def:Z}
&Z_j= \mathbb{1} \otimes \pro{j}_{\hat{Z}},
\end{align}
for $j\in\{0,\ldots,3\}$, and the identity is extended to all registers other than $\hat{Z}$. Using \cite[Theorem 1]{coles2012unification}, we can then obtain the following

\begin{lemma}\label{Lemma:DimReduct}
For all Hilbert spaces $\mc{H}_{\hat{E}}$ it holds
\begin{align}\label{eq:Optimization}
&\min_{\ket{\omega}\in\Sigma'_{\hat{E}}(\tilde p)}H(\hat{Z}|IC\hat{E})_{\widetilde{\mc{M}}(\omega)}= p^\mathrm{K}\min_{\omega\in\Sigma'(\tilde p)}D(\mc{G}(\omega_{AB})||\mc{Z}[\mc{G}(\omega_{AB})]),
\end{align}
where we have defined the set
\begin{equation}
\Sigma'(\tilde p)=\left\{\omega_{AB}\in \mc{D}(\mc{H}_{AB}):\Tr\left[\(\pro{x}_A\otimes \tilde{R}_B^z\) \omega_{AB} \right]=\tilde p(x,z)\;\forall(x,z)\in\tilde{\mathcal{C}}\land \omega_{A}=\frac{1}{4}\sum_{i,j=0}^3\<\varphi_j|\varphi_i\>\ket{i}\bra{j}_A \right\}, \label{eq:feasibleSet}
\end{equation}
which is independent of the reference system $\hat{E}$.
\end{lemma}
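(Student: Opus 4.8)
The plan is to establish the identity in three movements: first reduce the conditional entropy $H(\hat{Z}|IC\hat{E})$ to a single key-round contribution carrying the prefactor $p^\mathrm{K}$; then rewrite that contribution as the relative entropy $D(\mc{G}(\omega_{AB})\|\mc{Z}[\mc{G}(\omega_{AB})])$ by means of the coherent-measurement identity of \cite[Theorem 1]{coles2012unification}; and finally use that this quantity depends on $\ket{\omega}$ only through the reduced state $\omega_{AB}$ to collapse the optimisation over purifications in $\hat{E}$ onto the reference-independent set $\Sigma'(\tilde p)$.

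First I would exploit that $I$ is a classical flag. Since $\widetilde{\mc{M}}$ in \eqref{Mtilde} outputs $I=0$ with weight $p^\mathrm{K}$ and $I=1$ with weight $1-p^\mathrm{K}$, the conditional entropy splits as $H(\hat{Z}|IC\hat{E}) = p^\mathrm{K}\,H(\hat{Z}|C\hat{E})_{I=0} + (1-p^\mathrm{K})\,H(\hat{Z}|C\hat{E})_{I=1}$. In the test branch $\hat{Z}=\perp$ deterministically, so its contribution vanishes; in the key branch $C=\perp$ deterministically, so conditioning on $C$ is vacuous. Moreover, on the key branch the outcome $x$ of Alice's (discarded) measurement is summed out, $\sum_x\pro{x}_A=\mathbb{1}_A$, so the key-round state is $\rho^{I=0}_{\hat{Z}\hat{E}}=\sum_z\Tr_{AB}[(\mathbb{1}_A\otimes R^z_B)\,\omega_{AB\hat{E}}]\otimes\pro{z}_{\hat{Z}}$, which is exactly the pinched coherent key map $\mc{Z}\circ\mc{G}$ applied to $\omega$. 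This yields $H(\hat{Z}|IC\hat{E})_{\widetilde{\mc{M}}(\omega)} = p^\mathrm{K}\,H(\hat{Z}|\hat{E})_{\mc{Z}\circ\mc{G}(\omega)}$.

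Next I would invoke \cite[Theorem 1]{coles2012unification}. Because $\ket{\omega}_{AB\hat{E}}$ is pure and $\sum_{z=0}^3 R^z_B=\mathbb{1}_B$ (the resolution of identity of the heterodyne POVM, here grouped into quadrants), the operator $G$ of \eqref{def:G} is a genuine isometry and the theorem applies, removing the dependence on the reference system: $H(\hat{Z}|\hat{E})_{\mc{Z}\circ\mc{G}(\omega)} = D(\mc{G}(\omega_{AB})\|\mc{Z}[\mc{G}(\omega_{AB})])$, a quantity referencing $\omega$ only through $\omega_{AB}$. Combining with the previous step gives $H(\hat{Z}|IC\hat{E})_{\widetilde{\mc{M}}(\omega)} = p^\mathrm{K}\,D(\mc{G}(\omega_{AB})\|\mc{Z}[\mc{G}(\omega_{AB})])$ for every feasible $\ket{\omega}$.

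Finally I would carry out the dimension reduction. The $\ge$ direction is immediate: the constraints defining $\Sigma'_{\hat{E}}(\tilde p)$ act only on $\omega_{AB}$, so every feasible $\ket{\omega}$ has $\omega_{AB}\in\Sigma'(\tilde p)$, whence the objective is bounded below by $p^\mathrm{K}\min_{\omega\in\Sigma'(\tilde p)}D(\cdots)$. For $\le$, I would take a minimiser $\omega_{AB}^\star\in\Sigma'(\tilde p)$, purify it into a sufficiently large $\hat{E}$, and observe that the resulting $\ket{\omega^\star}$ lies in $\Sigma'_{\hat{E}}(\tilde p)$ and attains $p^\mathrm{K}D(\cdots)$; since all purifications of a fixed $\omega_{AB}$ differ by an isometry on $\hat{E}$ and leave $H(\hat{Z}|\hat{E})$ invariant, the value is independent of the reference system. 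The \textbf{main obstacle} is to ensure that the Coles identity and the relative-entropy manipulations remain valid on the infinite-dimensional Fock space $\mc{H}_B$—in particular that the entropies and the support condition $\supp(\mc{G}(\omega_{AB}))\subset\supp(\mc{Z}[\mc{G}(\omega_{AB})])$ are well defined—and to verify that the purifying system can always be chosen large enough to realise the optimal $\omega_{AB}^\star$, so that the stated equality holds for any (sufficiently large) $\mc{H}_{\hat{E}}$.
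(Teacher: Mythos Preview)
Your proposal is correct and follows essentially the same approach as the proof cited by the paper in \cite[Appendix A]{BPWFA23}: split the conditional entropy over the classical flag $I$ to extract the $p^\mathrm{K}$ factor and kill the test branch, apply \cite[Theorem 1]{coles2012unification} to the pure state $\mc{G}(\omega_{AB\hat{E}})$ to obtain the relative-entropy form depending only on $\omega_{AB}$, and then collapse the minimisation onto $\Sigma'(\tilde p)$ via purification. Your flagged concern about the size of $\mc{H}_{\hat{E}}$ is legitimate---strict equality requires $\hat{E}$ large enough to purify the optimal $\omega_{AB}^\star$---but note that the direction $\geq$ holds for every $\hat{E}$, which is all that is needed for the crossover min-tradeoff function downstream.
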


The proof of Lemma \ref{Lemma:DimReduct} can be found in \cite[Appendix A]{BPWFA23}. We can now define a function $g:\mc{P}_{\tilde{\mc{C}}}\to\mathbb{R}$ in terms of the r.h.s. of eq (\ref{eq:Optimization}), and note that it is convex.

\begin{proposition}\emph{\cite{winick2018reliable,liu2019device}} \label{lemma:conv}
For a given $0< p^\mathrm{K}\leq 1$, the function $g:\mc{P}_{\tilde{\mc{C}}}\to\mathbb{R}$,
\begin{equation}\label{eq:MinTradeOff2}
g(\tilde p)=p^\mathrm{K}\min_{\omega\in\Sigma'(\tilde p)}D(\mc{G}(\omega_{AB})||\mc{Z}[\mc{G}(\omega_{AB})])
\end{equation}
is convex.
\end{proposition}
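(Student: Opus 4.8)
The goal is to show convexity of the function
\begin{equation*}
g(\tilde p)=p^\mathrm{K}\min_{\omega\in\Sigma'(\tilde p)}D(\mc{G}(\omega_{AB})\|\mc{Z}[\mc{G}(\omega_{AB})]).
\end{equation*}
Since $p^\mathrm{K}>0$ is a fixed constant, it suffices to prove convexity of the map $\tilde p\mapsto \min_{\omega\in\Sigma'(\tilde p)}D(\mc{G}(\omega_{AB})\|\mc{Z}[\mc{G}(\omega_{AB})])$. The plan is to combine two ingredients: first, that the objective functional $\omega_{AB}\mapsto D(\mc{G}(\omega_{AB})\|\mc{Z}[\mc{G}(\omega_{AB})])$ is jointly convex in $\omega_{AB}$; and second, that the constraint set $\Sigma'(\tilde p)$ depends on $\tilde p$ in an affine manner, so that a convex combination of feasible points for two distributions is feasible for the corresponding convex combination of those distributions.

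First I would establish convexity of the objective in $\omega_{AB}$. Note that $\mc{G}$ is a fixed completely positive map (a single Kraus operator, eq.~(\ref{def:G})) and $\mc{Z}$ is the pinching channel (eq.~(\ref{def:Z})). The key fact is that the relative entropy between a state and its pinching, $\rho\mapsto D(\rho\|\mc{Z}[\rho])$, is convex; indeed this quantity equals $H(\mc{Z}[\rho])-H(\rho)$ up to the standard identity, and more directly it can be written as a conditional entropy of the coherently-measured state, whose convexity follows from joint convexity of the Umegaki relative entropy together with the linearity of $\mc{Z}$. Composing with the linear map $\mc{G}$ preserves convexity, so $\omega_{AB}\mapsto D(\mc{G}(\omega_{AB})\|\mc{Z}[\mc{G}(\omega_{AB})])$ is convex on $\mc{D}(\mc{H}_{AB})$.

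Next I would handle the minimisation over the feasible set. The key structural observation is that $\Sigma'(\tilde p)$, defined in eq.~(\ref{eq:feasibleSet}), is cut out by the \emph{affine} constraints $\Tr[(\pro{x}_A\otimes\tilde R^z_B)\omega_{AB}]=\tilde p(x,z)$ together with a fixed constraint on the marginal $\omega_A$. Thus for $\tilde p=\lambda\tilde p_1+(1-\lambda)\tilde p_2$, if $\omega_1\in\Sigma'(\tilde p_1)$ and $\omega_2\in\Sigma'(\tilde p_2)$ are near-optimal, then $\omega:=\lambda\omega_1+(1-\lambda)\omega_2$ lies in $\Sigma'(\tilde p)$, because the trace constraints are linear in $\omega_{AB}$ and the marginal constraint is identical for all three. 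Evaluating the (convex) objective at this feasible $\omega$ and invoking the convexity established above then yields
\begin{equation*}
g(\tilde p)\le p^\mathrm{K}\,D(\mc{G}(\omega)\|\mc{Z}[\mc{G}(\omega)])\le \lambda g(\tilde p_1)+(1-\lambda)g(\tilde p_2),
\end{equation*}
up to a vanishing error from the near-optimality, which gives the claim.

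The main obstacle I anticipate is purely technical rather than conceptual: one must ensure the minima defining $g(\tilde p_1)$ and $g(\tilde p_2)$ are attained (or argue with $\varepsilon$-optimal minimisers and take limits), and verify that $\Sigma'(\tilde p)$ is nonempty and compact so that the infimum is genuinely a minimum. In the infinite-dimensional setting this requires some care, but the photon-number cutoff imposed in the asymptotic-rate computation effectively reduces to a finite-dimensional convex programme where attainment and compactness are standard. Given that, the argument is the routine ``convex objective minimised over an affinely-parametrised feasible set is convex'' template, and the references \cite{winick2018reliable,liu2019device} already contain this reasoning.
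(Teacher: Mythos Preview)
Your argument is correct and is precisely the standard one: joint convexity of the Umegaki relative entropy together with linearity of $\mc{G}$ and $\mc{Z}\circ\mc{G}$ gives convexity of the objective in $\omega_{AB}$, and the affine dependence of the constraints in $\Sigma'(\tilde p)$ on $\tilde p$ then lets you push the convex combination through the minimisation in the usual way. Note, however, that the paper does not actually supply its own proof of this proposition---it simply attributes the result to \cite{winick2018reliable,liu2019device} and moves on---so there is no in-paper argument to compare against; what you have written is essentially the proof one finds in those references.
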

For any $\tilde p\in\mathcal{P}_{\tilde{\mc{C}}}$, eq. (\ref{eq:MinTradeOff2}) is a convex optimisation, whose dependence on the distribution $\tilde p$ is included in the constraints. For any $\tilde  p\in\mc{P}_{\tilde{\mc{C}}}$, it takes the explicit form 
\begin{align}\label{eq:OriginalProgram}
\min_{\omega_{AB}}\ &D(\mc{G}(\omega_{AB})||\mc{Z}[\mc{G}(\omega_{AB})]),\\
\text{s.t.} \; &\omega_{AB}\geq0,\nonumber\\
&\forall (x,z) \in \tilde{\mathcal{C}}:\nonumber\\
&\;\Tr\left[\(\pro{x}_A\otimes \tilde{R}_B^z\) \omega_{AB} \right]=\tilde p(x,z),\nonumber\\
&\;\Tr_B[\omega_{AB}]=\rho_A\nonumber.
\end{align}
where  $\{\tilde{R}^z\}_z$ are region operators as defined in \eqref{eq:OperatorsPE}, and we wrote Alice's marginal as
\begin{equation}
    \rho_A = \frac{1}{4}\sum_{i,j=0}^3\<\varphi_j|\varphi_i\>\ket{i}\bra{j}_A.
\end{equation}
Note that this constraint on the marginal already implies the unit trace condition, hence we removed the redundant condition from \eqref{eq:OriginalProgram}. By the nature of duality, which roughly speaking shifts the constraints into the dual objective, the objective function of the dual problem corresponding to \eqref{eq:OriginalProgram} will explicitly depend on $\tilde p$ in an affine way---thus, we can use the dual feasible point, which always provides a reliable lower bound to the original optimisation, and derive a crossover min-tradeoff function that is affine with respect to the probabilities.

\subsection{Optimizing the Min-Tradeoff Function}\label{Subsec:Cutoff}

In this section we describe how we can transform the optimization problem given by eq. \eqref{eq:OriginalProgram} to obtain an affine min-tradeoff function. Note that this optimisation involves an infinite-dimensional space on  Bob's side. In order to solve this inconvenience for the numerics, we will introduce a cutoff in the number of photons, $N_c\in\mathbb{N}$, and solve the resulting optimisation for increasing values of $N_c$. As the obtained results numerically converge, that is, at some point hardly change when increasing $N_c$, we will make a cutoff assumption and assume that the solution obtained for a large enough $N_c$ is almost equal to the one obtained using an unbounded $N_c$. This convergence is a reasonable assumption, provided the separability of the underlying Hilbert space. 

In what follows, we first show how to lower bound \eqref{eq:OriginalProgram}. A common technique for that is the method based on the Frank-Wolfe algorithm introduced in \cite{winick2018reliable}, used for example in \cite{kanitschar2023finite,BPWFA23}. This method is, however, numerically unstable, and as a first-order method it often stagnates before approaching the actual minimum, making it hard to achieve high precision. Instead, we shall use a recently introduced technique \cite{lorente2024} that directly solves \eqref{eq:OriginalProgram} using conic optimization. This allows us to find a solution efficiently, reliably, and with high precision.

The principle of this technique is reformulating the relative entropy as a convex cone \cite{fawzi2023optimal}, and using an efficient implementation \cite{coey2023} of the recently discovered Skajaa and Ye's algorithm \cite{skajaa2015,papp2017} to optimize over it. While this allows for the solution of a wide class of optimization problems involving the relative entropy, it is not yet enough to solve \eqref{eq:OriginalProgram}. The difficulty is that this problem is not strictly feasible, as the matrix $\mc{G}(\omega_{AB})$ will necessarily have null eigenvalues, and strict feasibility is necessary for the algorithm to work reliably. We emphasize that this is not a limitation of this particular method, but a generic feature of convex optimization algorithms \cite{drusvyatskiy2017,hu2021robust}.\footnote{We also note that $\omega$ itself does not have null eigenvalues according to the nature of our numerical method. Namely, the numerical solver starts with an initial, full rank state and performs the optimization according to a barrier function, whose design effectively prevents the states from having non-positive eigenvalues.}

In order to solve this issue, it is necessary to perform facial reduction along the lines of \cite{hu2021robust}. That is, to reformulate the problem in terms of positive definite matrices only. In order to do that, first we use the identity \cite{winick2018reliable}
\begin{equation}
    D(\mc{G}(\omega)||\mc{Z}[\mc{G}(\omega)]) = -H(\mc{G}(\omega)) + H(\mc{Z}[\mc{G}(\omega)]),
\end{equation}
where $H$ is the von Neumann entropy. Then we note that $\mc{G}(.)$ is an isometry, and therefore $H(\mc{G}(\omega)) = H(\omega)$, and furthermore $\hat{\mathcal{Z}}(\omega) := \mc{Z}[\mc{G}(\omega)]$ is positive definite for any positive definite $\omega$, as the region operators \eqref{R} are full rank. This allows us to define
\begin{equation}
r(\omega) := - H(\omega) + H(\hat{\mathcal{Z}}(\omega)), \label{eq:newobj}
\end{equation}
which is equal to the objective of \eqref{eq:OriginalProgram} but is now formulated exclusively in terms of positive definite matrices. This makes the problem strictly feasible, but it is no longer a minimization of a relative entropy, so we cannot use the relative entropy cone anymore. 

This makes it necessary to define a new convex cone to represent \eqref{eq:newobj}, namely \begin{equation}\label{eq:newcone}
    \mathcal{K}^{\hat{\mc{Z}}} = \left\{(y,\omega) \in \mathbb{R}\times \mathbb{H}^{4(N_c+1)}_{+}: y \geq - H(\omega) + H(\hat{\mathcal{Z}}(\omega)) \right\},
\end{equation}
where $\mathbb{H}^d_{+}$ is the set of positive semidefinite matrices in $d$ dimensions. Using this cone, we rewrite \eqref{eq:OriginalProgram} as
\begin{align}
\min_{y,\omega_{AB}}\ & y,\label{eq:ConicProgram2}\\
\text{s.t.} \; &(y,\omega_{AB})\in \mathcal{K}^{\hat{\mc{Z}}},\;\nonumber\\
&\forall (x,z) \in \tilde{\mathcal{C}}:\nonumber\\
&\;\Tr\left[\left(\pro{x}_A\otimes \tilde{R}_B^z\right) \omega_{AB} \right]=\tilde p(x,z),\nonumber\\
&\Tr_B[\omega_{AB}]=\rho_A\nonumber.
\end{align}
This is a conic problem that we can optimize with the technique from \cite{lorente2024}, as $\mathcal{K}^{\hat{\mc{Z}}}$ is a particular case of the QKD cone implemented there. Note that the cone explicitly depends on the map $\hat{\mc{Z}}$, which allows us to exploit its structure, unlike the generic relative entropy cone. We use the fact that $\hat{\mc{Z}}(\omega)$ is necessarily block diagonal in $\hat{Z}$, which greatly reduces the size of the matrices that need to be processed and thus greatly improves performance.

Once a solution $\tilde{\omega}$ to \eqref{eq:ConicProgram2} has been found, we can proceed with a linearisation of the original program in order to build the min-tradeoff function. We note that our numerical method also provides a solution for the dual of \eqref{eq:ConicProgram2}, which can be used directly to build the min-tradeoff function. Nevertheless, we use a linearisation to explicitly illustrate the construction of the min-tradeoff function, while noting that this extra step does not induce a noticeable loss in the final secret key rate. For a given $\tilde  p\in\mc{P}_{\tilde{\mc{C}}}$, let $\omega_{\tilde{p}}^*\in\Sigma'(\tilde p)$ be the minimiser of (\ref{eq:OriginalProgram}) and $\tilde{\omega}\in\mc{D}(\mc{H}_{AB})$ our solution\footnote{Actually, this method is valid for any feasible primal point.}. Then,

\begin{align}
\frac{g(\tilde p)}{p^\mathrm{K}}=r(\omega_{\tilde{p}}^*)&\geq r(\tilde{\omega})+\Tr\left[(\omega_{\tilde{p}}^*-\tilde{\omega})\nabla r(\tilde{\omega})\right]\\
&\geq r(\tilde{\omega})-\Tr\left[\tilde{\omega}\nabla r(\tilde{\omega})\right]+\min_{\sigma\in\Sigma'(\tilde p)}\Tr\left[\sigma\nabla r(\tilde{\omega})\right]. \label{eq:GradMinimization}
\end{align}
The first inequality is due to the fact that $r$ is a convex, differentiable function over the convex set $\mc{D}(\mc{H}_{AB})$, hence it can be lower bounded by its first order Taylor expansion at $\tilde{\omega}$ (see e.g. \cite[p.69]{boyd2004convex}), and the last inequality is due to the fact that $\omega_{\tilde p}^*\in\Sigma'(\tilde p)$ is the primal optimal. On the other hand, we calculate the matrix gradient of the objective function
\begin{align}
\nabla r(\omega) = \log (\omega ) - \hat{\mathcal{Z}}^\dagger [\log\hat{\mathcal{Z}}(\omega)], \label{eq:MatrixGrad}
\end{align}
where we note that $\tilde{\mc{Z}}(\omega)>0$ whenever $\omega >0$, so the matrix gradient $\nabla r(\omega)$ exists. Besides, we note that
\begin{equation}
    r(\tilde{\omega})-\Tr\left[\tilde{\omega}\nabla r(\tilde{\omega})\right] = 0
\end{equation}
for any $\tilde{\omega}$ \cite{george2022finite}, so we can effectively remove such term in \eqref{eq:GradMinimization}. Now, for any $\tilde{\omega}$ and $\tilde p$, the optimisation problem in eq. (\ref{eq:GradMinimization}) is an instance of semidefinite program in standard form, explicitly given by
\begin{align}
\min_{\sigma_{AB}}\ &\Tr\left[\sigma\nabla r(\tilde{\omega})\right],\label{eq:SDP2}\\
\text{s.t.} \; &\sigma_{AB}\geq0,\;\nonumber\\
&\forall (x,z) \in \tilde{\mathcal{C}}:\nonumber\\
&\;\Tr\left[\left(\pro{x}_A\otimes \tilde{R}_B^z\right) \sigma_{AB} \right]=\tilde p(x,z),\nonumber\\
&\;\Tr_B[\sigma_{AB}]=\rho_A\nonumber.
\end{align}
Deriving an expression for the min-tradeoff function requires to find the dual of this minimisation, which involves an affine function that incorporates the probabilities. We proceed with this task by firstly reformulating the constraint on Alice's marginal as follows \cite{LinPhD}. Let $\{\Theta_A^i\}_{i=1}^{16}$ be a basis spanning the space of Hermitian matrices on $\mc{H}_A$, and let
\begin{equation}
\theta_i=\frac{1}{4}\sum_{x,y=0}^3\<\varphi_y|\varphi_x\>\Tr\left[\Theta_A^i \ket{x}\bra{y}_A \right]
\end{equation}
for $i \in \{1,\ldots,16\}$. Then, the constraint on Alice's marginal $\rho_A$ can be expressed as
\begin{equation}
\Tr\left[\left(\Theta^i_A\otimes\mathbb{1}_{B}\right)\sigma_{AB}\right]=\theta_i, \quad \forall i \in \{1,\ldots,16\}.
\end{equation}
However, we note that for $x \in \{0,\ldots,3\}$
\begin{equation}
    \sum_{z=0}^5 \pro{x}_A\otimes \tilde{R}^z_B = \pro{x}_A \otimes \mathbb{1}_B.
\end{equation}
Therefore, four of the matrices that form $\{\Theta_A^i\}_{i=1}^{16}$ (as well as the unit trace condition) are actually linearly dependent on the constraints formed by parameter estimation measurements, so we will only effectively use the subset $\{\Theta_A^i\}_{i=1}^{12}$ for the remaining derivations. Let us now denote the dual problem of \eqref{eq:SDP2} as
\begin{equation}\label{eq:maximization}
\max_{\nu,\kappa \in\Sigma^*_{\tilde{\omega}}}\ell_{\tilde p}(\nu,\kappa)
\end{equation}
where $\nu=(\nu_{0,0},\ldots,\nu_{3,5}) \in \mathbb{R}^{|\tilde{\mc{C}}|}$, $\kappa = (\kappa_1,\ldots\kappa_{12})\in\mathbb{R}^{12}$ and the dual objective is explicitly  
\begin{equation}
\ell_{\tilde p}(\nu,\kappa)=\sum_{(x,z)\in\tilde{\mc{C}}}\nu_{xz}\tilde p(x,z) +\sum_{i=1}^{12}\kappa_{i}\theta_i ,
\end{equation}
which is affine in  the probabilities $\tilde p$. The dual feasible set $\Sigma^*_{\tilde{\omega}}$ is defined as
\begin{equation}
\Sigma^*_{\tilde{\omega}}=\left\{(\nu,\kappa)\in \mathbb{R}^{|\tilde{\mc{C}}|}\times \mathbb{R}^{12} :
\nabla r(\tilde{\omega})-\sum_{(x,z)\in\tilde{\mc{C}}}\nu_{xz}\(\pro{x}_A\otimes \tilde{R}_B^z\) -\sum_{i=1}^{12}\kappa_{i}\(\Theta^i_A\otimes\mathbb{1}_{B}\)\geq0\right\} ,
\end{equation}
which does not depend on $\tilde p$. Then, using weak duality one can find
\begin{align}
g(\tilde p)=p^\mathrm{K}r(\omega^*_{\tilde p})&\geq p^\mathrm{K}\max_{\nu,\kappa \in\Sigma^*_{\tilde{\omega}}}\ell_{\tilde p}(\nu,\kappa) \\
&\geq p^\mathrm{K} \ell_{\tilde p}(\nu,\kappa)\\
&=:g_{\nu,\kappa,\tilde{\omega}}(\tilde{p}),\label{98}
\end{align}
for any $\tilde{\omega}\in\mc{D}(\mc{H}_{AB})$ and any $(\nu,\kappa)\in\Sigma^*_{\tilde{\omega}}$. We note that for any such choice of variables, the function $g_{\nu,\kappa,\tilde{\omega}}:\mc{P}_{\tilde{\mc{C}}}\to\mathbb{R}$ is an affine crossover min-tradeoff function.

We now describe how we can numerically obtain optimal (up to numerical precision) choices for our parameters $\tilde{\omega}_0$, $\nu_0$ and $\kappa_0$ in the crossover min-tradeoff function (\ref{98}) for a given distribution $\tilde{p}\in\mc{P}_{\tilde{\mc{C}}}$, which we obtain from simulating an honest implementation of  the protocol.  We note that it is possible to analytically confirm that our choices for said points are feasible, i.e. that  $\tilde{\omega}_0\in\mc{D}(\mc{H}_{AB})$ and $(\nu_0,\kappa_0)\in\Sigma^*_{\tilde{\omega}_0}$. Thus we can verify that the corresponding function $g_{\nu_0, \kappa_0,\tilde{\omega}_0}$ is indeed a valid crossover min-tradeoff function. 

Once such two points have been calculated after solving \eqref{eq:ConicProgram2} and \eqref{eq:maximization}, we build
\begin{align}
g_0(\tilde{p})&:=g_{\nu_0,\kappa_0,\tilde{\omega}_0}(\tilde p)\nonumber\\
&={p^\mathrm{K}}\( \tilde{g}_0 +\sum_{{(x,z)\in\tilde{\mc{C}}}}\nu_{0,xz}\tilde p(x,z)\)\label{def:g},
\end{align}
with a constant (in the sense that it is independent of $\tilde {p}$)
\begin{equation}
\tilde{g}_0 := \sum_{i=1}^{12}{\kappa}_{0,i} \theta_i.
\end{equation}
 
In order to compute the higher order terms of the generalised EAT, we need to find $\max(g_{0})=\max_{\tilde p\in\mc{P}_{\tilde{\mc{C}}}}g_{0}(\tilde p)$ and $\min(g_{0})=\min_{\tilde p\in \mc{P}_{\tilde{\mc{C}}}}g_{0}(\tilde p)$. We note that, as $\mc{P}_{\tilde{\mc{C}}}$ is convex and $g_{0}$ is affine, we can restrict our considerations to the extreme points of $\mc{P}_{\tilde{\mc{C}}}$. Namely we get
\begin{align} \label{eq:Maxf}
&\max(g_{0})=p^\mathrm{K}\tilde{g}_0+p^\mathrm{K}\underbrace{\max\left\{\nu_{0,xz}:(x,z)\in\tilde{\mc{C}} \right\}}_{=:\max(\nu_0)},\\
&\min(g_{0})=p^\mathrm{K} \tilde{g}_0 +p^\mathrm{K}\underbrace{\min\left\{\nu_{0,xz}: (x,z)\in\tilde{\mc{C}}  \right\}}_{=:\min(\nu_0)}. \label{Ming}
\end{align}
In the case where the minimisers are non-positive and the maximisers are non-negative, we can upper bound 
\begin{align}\label{eq:128}
    \max(g_{0})-\min(g_{0})
    &\leq \max(\nu_0) - \min(\nu_0),
    \end{align}
which is independent of $p^\mathrm{K}$. Finally, we can introduce the min-tradeoff function induced by our crossover min-tradeoff function $g_0$, given by eq. (\ref{def:g}), via eqs. (\ref{def:f1},\ref{def:f2}).
\begin{align}
f(p)&=\sum_{c\in\tilde{\mathcal{C}}}p(c)\(\max(g_0)+\frac{1}{1-p^\mathrm{K}}[g_0(\delta_c)-\max(g_0)]\)+p(\perp,\perp)\max(g_0)\\
&=\max(g_0)+\sum_{c\in\tilde{\mathcal{C}}}\frac{p(c)}{1-p^\mathrm{K}}[g_0(\delta_c)-\max(g_0)]\\
&=\underbrace{p^\mathrm{K}\(\tilde{g}_0+\max(\nu_0)\)}_{=:\tilde{f}}+\sum_{(x,z)\in\tilde{\mc{C}}}\underbrace{p^\mathrm{K}\frac{\nu_{0,xz}-\max(\nu_0)}{1-p^\mathrm{K}}}_{=:h_{x,z}}p(x,z)\\
&=\tilde{f}+f^\mathrm{PE}(p),\label{eq:ourf}
\end{align}
where 
\begin{align}
&f^\mathrm{PE}(p)=\sum_{(x,z)\in\tilde{\mc{C}}} h_{x,z}p(x,z),\label{eq:ourfPE}
\end{align}
which is of the form of eq. (\ref{eq:fPE}), allowing us to use $f$ as a min-tradeoff function in Theorem \ref{FinteSizePhys}. When applying Theorem \ref{FinteSizePhys}, the term that goes into the bound on the key rate is $f(p_0)$, for a protocol-respecting probability distribution $p_0$. In such case, it holds
\begin{align}
    &p_0(x,z) = \left(1-p^\mathrm{K}\right)\tilde{p}_0(x,z) \quad \forall (x,z) \in \tilde{\mathcal{C}}, \\
    &\sum_{(x,z)\in \tilde{\mathcal{C}}} \tilde{p}_0(x,z) = 1.
\end{align}
Then, the min-tradeoff function can be reformulated in a more convenient form as follows.
\begin{align}
f(p_0)&=p^\mathrm{K}\(\tilde{g}_0+\max(\nu_0)+ \sum_{(x,z)\in\tilde{\mc{C}} }[\nu_{0,xz}-\max(\nu_0)]\tilde{p}_0(x,z)\)\\
&=p^\mathrm{K}\(\tilde{g}_0+ \sum_{(x,z)\in\tilde{\mc{C}}}\nu_{0,xz} \tilde{p}_0(x,z)\)\label{146}\\
&=g_0(\tilde{p}_0). \label{eq:FinalMinTradeoff}
\end{align}
In summary, we have derived an affine min-tradeoff function which depends on the following parameters: A state $\tilde{\omega}_0\in\mc{D}(\mc{H}_{AB})$ and the dual feasible points $(\nu_0,\kappa_0)\in\Sigma^*_{\tilde{\omega}_0}$. In order to find optimal values for the parameters and the corresponding key rates in a realistic implementation, we will employ the numerical method of \cite{lorente2024} in the following section.

\subsection{Assumptions for security}

Before moving to the computation of key rates, we would like to comment on the assumptions used in the derivation of our security proof. First of all, the considered protocol is formulated in a fully device-dependent manner and, therefore, assumes trusted state preparations by Alice and measurements by Bob, as in the vast majority of previous CVQKD security proofs. In our case, we also use two additional assumptions: \emph{(i) a bounded-energy assumption}, where Eve’s attack uses bounded, but arbitrary, energy; and \emph{(ii) a cut-off assumption}, saying that the asymptotic trade-off computed for large enough number of photons, $N_c$, is very close to the value for arbitrary $N_c$. This is not to be confused with the standard formulation of the cut-off assumption, usually presented as \emph{(i') Eve’s attack uses states only with a bounded number of photons $N_c$}.  In particular our security proof under assumptions (i) and (ii) also implies security under assumption (i'), but the opposite is not necessarily true.

\section{Numerical implementation and results} \label{subs:NumImplementation}
In order to verify that our approach produces non-trivial key rates in a realistic implementation, we simulate an experiment in which Alice and Bob are connected by an optical fibre of length $L$, characterised by an excess noise $\xi$ and transmittance $\eta = 10^{-\alpha_{\text{att}} L/10}$. For the numerical analysis, we take an attenuation of $\alpha_{\text{att}}=0.2$ dB/km and $\xi=1\%$, matching the current standards in telecom optical fibres. This provides us with a simulated distribution $p^\mathrm{s}_0$
\begin{align*}
     p^\mathrm{s}_0(x,z) &= \int_{0}^\delta \int_{\frac{\pi(2z-1)}{4}}^{\frac{\pi(2z+1)}{4}} \frac{\gamma \exp\left(\frac{-|\gamma e^{i \theta} - \sqrt{\eta}\phi_x|^2}{1 + \eta \xi/2}\right)}{4 \pi (1 + \eta \xi/2)}d\theta d\gamma  , && z \in \{ 0,\ldots,3\}, \, x \in \{0, \ldots,3\} \\
     p^\mathrm{s}_0(x,z) &= \int_{\delta}^\Delta \int_{0}^{2\pi} \frac{\gamma \exp\left(\frac{-|\gamma e^{i \theta} - \sqrt{\eta}\phi_x|^2}{1 + \eta \xi/2}\right)}{4 \pi (1 + \eta \xi/2)}d\theta d\gamma  ,  &&  z = 4, \, x \in \{0, \ldots,3\} \\
     p^\mathrm{s}_0(x,z) &= \int_{\Delta}^\infty\int_{0}^{2\pi} \frac{\gamma \exp\left(\frac{-|\gamma e^{i \theta} - \sqrt{\eta}\phi_x|^2}{1 + \eta \xi/2}\right)}{4 \pi (1 + \eta \xi/2)}d\theta d\gamma  , && z = 5, \, x \in \{ 0, \ldots,3\}
\end{align*}

Here, the different integration intervals represent the fragments of the phase space corresponding to the modulation expressed in \eqref{eq:discPE}, and $\phi_x \in \{i^x\alpha\}_{x=0
}^3$ are the coherent state amplitudes used by Alice with $\alpha \in \mathbb{R}$. The region operators for the constraints $\{\tilde{R}_B^z\}_z$ are given by \eqref{eq:OperatorsPE}, such that they require an integration in the same, corresponding intervals. As it will be the case for the states, the numerical analysis requires us to project the operators onto a Fock basis bounded by a cutoff $N_c$, where values $N_c\geq 10$ are typically quoted \cite{lin2019asymptotic,hu2021robust}. It is observed that the obtained key rates numerically converge when increasing the cutoff, which is expected since the states are bounded in their energies according to the attenuation at the channel (so that high photon states are unoccupied), and that the underlying Hilbert space is separable---such that any infinite dimensional state can be well approximated by an appropriate finite basis. Using then the inner product \cite{burnett_1998}
 \begin{equation}
     \left<\gamma e^{i \theta}\right.|\left. n\right> = \frac{\gamma^n e^{-\gamma^2 / 2}  e^{-i n \theta}}{\sqrt{n!}},
 \end{equation}
we can expand the coherent states into the Fock basis. This leads to the additional advantage that the integrals emerging in the representation of the region operators can be solved analytically, which improves the numerical performance of our method. With 
\begin{eqnarray}
    \Gamma(1+n,a) = \int_a^\infty x^{n} e^{-x} dx
\end{eqnarray}
as the upper incomplete Gamma function, we have for $z\in\{0,\ldots,3\}$
\begin{equation}
    \tilde{R}^z_B = \sum_{n,m=0}^{N_c} \ket{m}\bra{n} \frac{\Gamma(1+n) - \Gamma(1+n,\delta^2)}{2 \pi \sqrt{\Gamma(1+n)\Gamma(1+m)}} \int_{\frac{\pi(2z-1)}{4}}^{\frac{\pi(2z+1)}{4}} e^{i (m-n) \theta} d\theta,
\end{equation}
and we note that the angular integral has an analytical solution according to the value of $m-n$. In the case of $\tilde{R}^4_B, \tilde{R}^5_B$, the angular integral spans the whole range $[0,2\pi)$, making these operators diagonal in the Fock basis
\begin{subequations}\label{eq:DiagonalOperators}
\begin{align} 
    \tilde{R}^4_B &= \sum_{n=0}^{N_c} \pro{n} \frac{\Gamma(1+n,\delta^2) - \Gamma(1+n,\Delta^2)}{\Gamma(1+n)},  \\
     \tilde{R}^5_B &= \sum_{n=0}^{N_c} \pro{n} \frac{\Gamma(1+n,\Delta^2)}{\Gamma(1+n)}.
\end{align}
\end{subequations}
With all the components of the optimisation defined, we can now proceed with the numerical construction of the min-tradeoff function, and the finite key analysis according to the model described in \ref{Subsec:Cutoff}. As a first step, we set a cutoff $N_c$ which, as we will find, produces secret key rates which are not dependent on its particular value. On the other hand, $\delta, \Delta$ and $\alpha$ are design parameters that can be optimized in order to increase the secret key rate. With a choice for all the mentioned elements, we can now solve \eqref{eq:ConicProgram2}---to this end, we use the implementation of the QKD cone from \cite{lorente2024} with the solver Hypatia \cite{coey2022solving} in the programming language Julia via the interface JuMP \cite{LubinDunningIJOC,DunningHuchetteLubin2017}, which permits the optimisation over our cone \eqref{eq:newcone}. The codes here employed are available in the GitHub repository \cite{CPG_2024_2}.

\begin{figure}[h!]
\centerline{\input{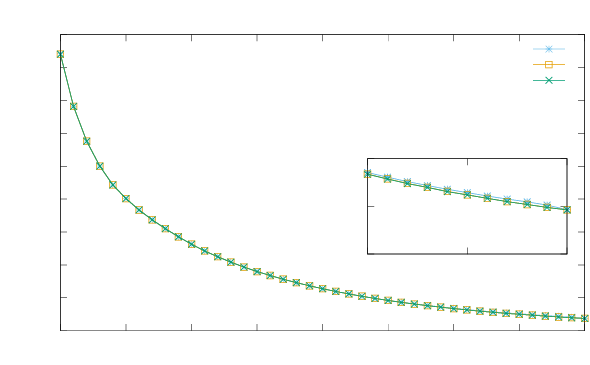}}
    \caption{Asymptotic secret keys for different values of the cutoff $N_c$, excess noise $\xi=1\%$ and error correction at the Shannon limit. The amplitude of the states was optimized at every distance, and $\Delta=5.0$, $\delta=2.0$.}
    \label{fig:Inset_Rates}
\end{figure}

Regarding the classical information leaked during error correction, we can assume an honest implementation of the protocol. For such case of an $\epsilon_\mathrm{EC}$-secure, robust error correction, 
\begin{equation} \label{eq:ErrorCorrection}
       \frac{1}{n}\mathrm{leak}_\mathrm{EC} \leq (1+f) H(\hat{Z}|\hat{X}) 
\end{equation}
where $\hat{X}$, $\hat{Z}$ represent one of the key string bits of Alice and Bob respectively after eliminating the $\perp$ symbol. The factor $f\geq 0$ represents the inefficiency of the error correction steps (with $f = 0$ for the Shannon limit, or perfect error correction) for a given \textit{hard decision} error correcting code \cite{leverrier2023information}, according to the fact that the information is already fully discretised at this step of the protocol. $H(\hat{Z}|\hat{X})$ can be computed numerically according to the distribution for Bob's heterodyne measurements adapted to the modulation for key rounds, namely
\begin{equation}
     p^\text{EC}_0 (x,z) = \int_0^\infty \int_{\frac{\pi (2z-1)}{4}}^{\frac{\pi(2z+1)}{4}} \frac{\gamma \exp\left(\frac{-|\gamma e^{i \theta} - \sqrt{\eta}\phi_x|^2}{1 + \eta \xi/2}\right)}{4 \pi (1 + \eta \xi/2)}d\theta d\gamma.
\end{equation}

\begin{figure}[h!]
\centerline{\input{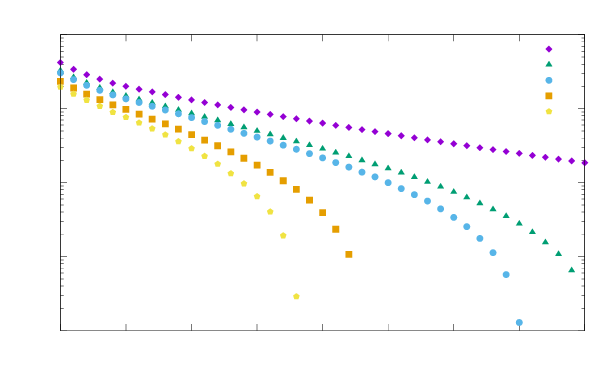}}
    \caption{Finite secret key rate according to \eqref{eq:FiniteKeyRate} for diverse block sizes $n$ under a cutoff $N_c=12$. The excess noise of the channel was taken as $\xi=1\%$, and the error correction at the Shannon limit. The amplitude of the states at every distance was taken to be the optimal one for $n\rightarrow \infty$, and we employed $N_c = 12$, $\Delta=5.0$, $\delta=2.0$ as well as $\epsilon=10^{-10}$, $\epsilon_\mathrm{PE}=10^{-10}$ and $\epsilon_\mathrm{PA}=10^{-10}$.}
    \label{fig:FRates}
\end{figure}


Our analysis starts with a choice for the cutoff parameter $N_c$, which we study in the case of the asymptotic regime. As it is provided in Figure \ref{fig:Inset_Rates}, the secret key rates do not noticeably change with respect to the value of $N_c$, and only exhibits slight differences at medium distances that vanish when $N_c>10$. This tight convergence is also a result of our numerical method based on conic programming, which provides improved results over previous approaches \cite[Figure 2]{BPWFA23}---where the Frank-Wolfe-based algorithm could only reach a limited precision without requiring an excessive overhead. According to these results, we set a cutoff $N_c=12$ for the remaining discussion, which ensures a proper balance in terms of reliability and time of execution.

Figure \ref{fig:FRates} provides diverse values for the secret key generation rate at the finite regime under the cutoff assumption, for different values $n$ of the total number of rounds as well as $\epsilon = \epsilon_\mathrm{PE} = \epsilon_\mathrm{PA} = 10^{-10}$, together with an optimized choice for the probability $p^\mathrm{K}$ according to the distance. Such curves were derived for $f=0\%$, and the amplitude for the states at every distance was taken to be the optimal in the case of $n \rightarrow \infty$. We observe, in particular, that for reasonable numbers of rounds such as $n=10^{8},10^{9}$ it is possible to perform our protocol while also obtaining high key generation rates in a regime of metropolitan distances (i.e., a few tens of kilometers). 

\begin{figure}[h!]
\centerline{\input{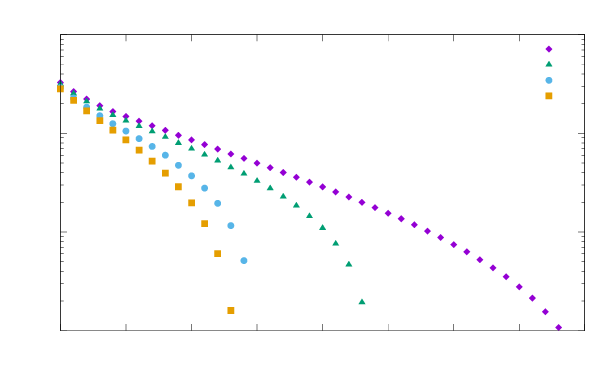}}
    \caption{Finite secret key rate for $n= 10^{10}$ rounds under diverse values of the error correction efficiency $f$. Any other parameters were taken to be the same as in Figure \ref{fig:FRates}.}
    \label{fig:FRates_Value_f}
\end{figure}

We conclude this analysis with a study of the error correction efficiency. This is illustrated by Figure \ref{fig:FRates_Value_f}, where the secret key rate at $n=10^{10}$ is represented according to different values of $f$ at the information leakage \eqref{eq:ErrorCorrection}. Here, we note that the signal to noise ratio for distances below $40$ km is of the order $10^{-1}$, for which efficient error correction methods exist such as LDPC or turbo codes. This makes our results represent a neat improvement over previous schemes, showing also the practicality of our approach---since they open the possibility of performing DM CVQKD under considerations of general security, accessible values of the error correction efficiency, as well as reduced numbers of rounds while keeping high key generation rates.

\subsection{Comparison to other works}

Let us now benchmark our outcomes against different results presented in literature. For instance, our approach only requires $n \sim 10^{8}$ rounds to perform at $15$ km according to Figure \ref{fig:FRates}, representing a neat improvement with respect to \cite[Figure 4]{BPWFA23}, which demands a minimum of $n \sim 10^{12}$. We make this enhancement explicit in Figure \ref{fig:FRates_GEAT_vs_EAT}---which also shows how eliminating the requirement of a virtual tomography simplifies the finite-size analysis, eliminating fluctuations present in the finite keys. Our results can also be compared to those provided by the protocols presented in \cite[Figures 2 and 3]{Matsuura2023} where secret keys at a maximum distance of 10 km are reported for $n=10^{12}$ and $\xi=1\%$; or \cite[Figure 3]{kanitschar2023finite} which also provides secret keys at 20 km with $n \sim 10^{9}$ albeit only in the case of collective security. However, for both cases we suggest a cautious comparison of the curves since said works employ a different scaling for the error correction (see e.g. \cite{leverrier2023information,denys2021explicit} for further information in this regard).

\begin{figure}[ht!]
\centerline{\input{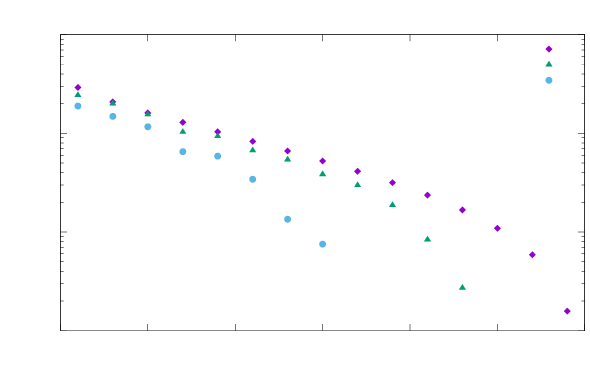}}
    \caption{Comparison of the secret key rates reported in \cite[Figure 4]{BPWFA23} and in our scenario under different values of $n$, together with $\xi = 1\%$, $N_c=12$ and $f=1\%$. For the curve derived under the GEAT, all other parameters were taken to be the same as in Figure \ref{fig:FRates}.}
    \label{fig:FRates_GEAT_vs_EAT}
\end{figure}

\subsection{Comparison to practical implementations}
The obtained improvements in terms of block sizes, however, come in principle with a caveat, namely, the method of GEAT~\cite{metger2022security} used in the security proof requires that Eve is only allowed to be in possession of one state at a time. Since this condition must be satisfied in every round, the repetition rate of the protocol clearly slows down, to enforce that there is only one signal in the channel at a time. For instance, the speed of light in an optical fiber is $\tilde{c}\approx 2 c/3$, for $c$ the one in vacuum, which for distances of the order of 20 km implies a repetition rate of the order of 10 Khz. This is however not a particularly relevant limitation for current CVQKD implementations.  In fact, the overall execution time in practical CVQKD implementations is fundamentally limited by the postprocessing of the signals -- such that, as in the case of a recent work~\cite{piétri2024qosst}, it takes of the order of 3 minutes to perform the digital signal processing for every frame of $10^6$ rounds. This indicates, provided a full post-processing of all the frames in parallel, a repetition rate of $\nu \approx 5.6\times 10^3$ Hz, slower than the one imposed by the GEAT requirements. Thus, using the GEAT does not necessarily represent a limitation  provided current technological standards, as long as Alice and Bob perform a postprocessing of the measured signals while also carrying the preparation and measurement of the remaining ones. 

Figure \ref{fig:RealAndGEAT} provides an estimation of the secret key rates in terms of bits per pulse, where the repetition rate $\nu$ is fixed according to the maximal value achieved due to the postprocessing time of \cite{piétri2024qosst}, and the repetition rate induced by the sequential structure of the GEAT -- namely, $\tilde{\nu} = \tilde{c}/L$ where $L$ is the distance between Alice and Bob. The latter represents the ultimate rate provided  by our analysis when the GEAT sequential requirement is the main limitation, while the former gives the expected rates when applying our formalism to state-of-the art implementations.



\begin{figure}
    \centerline{\input{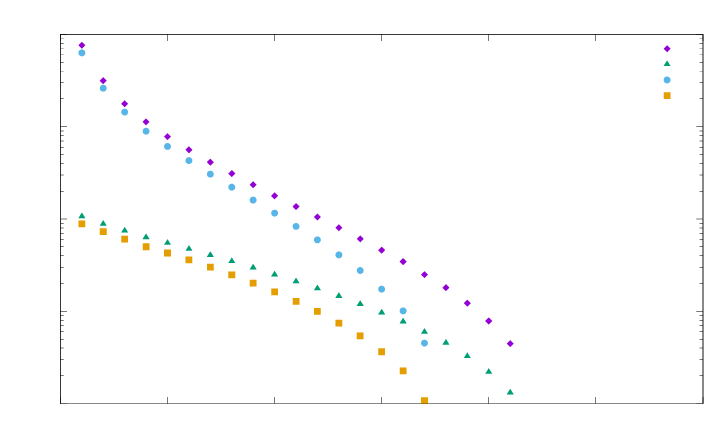}}
    \caption{Finite secret keys as bits per unit time according to different block sizes $n$ and repetition rates $\nu$ according to the postprocessing time \cite{piétri2024qosst}, and $\tilde{\nu} = \tilde{c}/L$ the sequential constraint imposed by the GEAT, with $\tilde{c}$ the speed of light in the fiber and $L$ the distance. Any other parameters were taken to be the same as in Figure \ref{fig:FRates}.}
    \label{fig:RealAndGEAT}
\end{figure}


\section{Discussion}
In this work, we provide a security proof for a DM CVQKD protocol in a scenario of general attacks. Albeit focused on four coherent states and fibre-based communications, we note that this approach can also be applied to any constellation of states, as well as different approaches as satellite-based communications provided the necessary changes in the model of the channel and actions performed by Alice and Bob. The application of the GEAT, in combination with the conic optimisation method of \cite{lorente2024}, allows us to distill positive key rates in the finite-size scenario of $n\sim10^8$ rounds, which represents a significant improvement in orders of magnitude over earlier works \cite{matsuura2021finite,Matsuura2023,yamano2024finite,BPWFA23}, which are also usually limited to a two-state constellation. In particular, block lengths of order $10^8$ are compatible with current experimental implementations, which approach the realm of GHz rates (see e.g. \cite{roumestan2022experimental}). 

This improvement in the finite-size regime can be partly attributed to the better performance of the new numerical method \cite{lorente2024}, as well as the use of the GEAT in combination with \cite{metger2022security} instead of the original EAT. In \cite{BPWFA23}, the EAT has to be combined with a virtual tomography of Alice's subsystem in order to take into account the fact that the protocol is prepare-and-measurement based, whereas the approach of \cite{metger2022security} allows for a direct application of the GEAT to said protocol. Since such virtual tomography requires not only extra steps in the security proof but also an expenditure in the key rate via statistical estimators, we thus achieve a simplified security framework as well as an increase in the finite key rate.

This enhanced performance comes with the structural limitations of the GEAT, where the repetition rate of the protocol is constrained by the requirement that Eve holds only one register at every step. As we observed from \cite{piétri2024qosst}, this is not critical for current experimental platforms, as long as the measurements are postprocessed during the preparation and measurement of the signals. However, future implementations may provide higher key rates by means of new techniques (e.g. code parallelization and precompilation, GPU-optimized performance), although said techniques typically come at a sharp cost and higher energy consumption. In this context, it also has been noted (see the discussion on page 3, and Corollary 6.1 of \cite{arqand2024generalized}) that it might be possible to lift the condition of Eve only having one subsystem at a time by combining the GEAT approach, which we use in this work, with the approach of \cite{HB24}. On similar grounds, the method of virtual tomography \cite{BPWFA23}, which could also be improved by using the numerical approach of \cite{lorente2024}, does not make any assumptions on the number of states Eve might hold at a time.



Another point to consider is the photon number cutoff. Whereas our security proof provided here already allows for infinite dimensional Hilbert spaces, our  numerical computation of tradeoff functions, or asymptotic key rates, fixes a maximal photon number of $N_c$, which is however increased until observing numerical convergence. There have been several results which overcome this assumption in the scenario of collective attacks \cite{upadhyaya2021dimension,lupo_quantum_2022,kanitschar2023finite} by splitting (\ref{eq:OriginalProgram}) into an optimisation over states on a finite Hilbert space and a correction term---using an energy test, which can be seen as an additional part of the parameter estimation, one can bound the probability of being outside the cutoff space with high probability. The bound on the probability then goes into the higher order term, as well as in the constraints of the finite-size optimisation. Considering the applicability of said model in the case of coherent attacks, we have realised that combining such an approach with the GEAT poses additional challenges in the setting of finite block sizes. When applying the GEAT, the outputs of the hypothetical energy tests have to be included into the alphabet $\tilde{\mathcal{C}}$, becoming a set of variables in the min-tradeoff function. As the dependence on these probabilities is not affine, the min-tradeoff function has to be linearised by taking a first order Taylor expansion. This is possible in principle, but did not provide us with positive key rates due to emerging numerical difficulties. We therefore leave the lifting of the photon number cutoff assumption for future work. In particular, the recently announced Rényi version of both the EAT and GEAT \cite{arqand2024generalized} looks promising in that respect, as it does not require the definition of an affine min-tradeoff function. Another possible avenue of reserach that could yield even better rates would be to use the techniques applied in \cite{george2022finite,kamin2024} to optimise the min-tradeoff function.

\section{Acknowledgments}

C.P.G thanks Marco Túlio Quintino for fruitful indications about numerical precision, and Yoann Piétri for suggestions about experimental aspects of CVQKD. We further thank Omar Fawzi, Min-Hsiu Hsieh, Lars Kamin, Florian Kanitschar, Bill Munro, Mizanur Rahaman, Gelo Noel Tabia, Ernest Tan, Toshihiko Sasaki and Shin-Ichiro Yamano for insightful discussions.  This work was supported by the ERC (AdG CERQUTE, grant agreement No. 834266), the AXA Chair in Quantum Information Science, Gobierno de España (Severo Ochoa CEX2019-000910-S, NextGen Quantum Communications and FUNQIP), Fundació Cellex, Fundació Mir-Puig, the EU (QSNP and Quantera Veriqtas), the Generalitat de Catalunya (CERCA program and the postdoctoral fellowship programme Beatriu de Pin\'{o}s), European Union's Horizon 2020 research and innovation programme under grant agreement No. 801370 (2019 BP 00097) within the  Marie Sklodowska-Curie Programme. The research of M.A. was supported by the European Union--Next Generation UE/MICIU/Plan de Recuperación, Transformación y Resiliencia/Junta de Castilla y León, and by the Spanish Agencia Estatal de Investigación, Grant No. RYC2023-044074-I.


\bibliography{CV}

\end{document}